






\documentclass[twocolumn]{autart}    
\usepackage{blindtext}
\usepackage{graphicx}
\usepackage[utf8]{inputenc}
\usepackage{graphics} 
\usepackage{epsfig} 
\usepackage{amssymb}  
\usepackage{cite}
\usepackage{mathtools, cuted}
\usepackage[cmintegrals]{newtxmath}
\usepackage{epstopdf}
\usepackage{graphics} 
\usepackage{epsfig}

\usepackage{helvet}
\usepackage{courier}
\usepackage{subfigure}
\usepackage{type1cm}
\setcounter{MaxMatrixCols}{25}
\setcounter{secnumdepth}{4}
\usepackage[utf8]{inputenc}
\usepackage[english]{babel}
\usepackage{makeidx}         
\usepackage{algorithmic}
\DeclareMathOperator*{\argmin}{\arg\!\min}

\usepackage{nomencl}
\usepackage[usenames, dvipsnames]{color}
\theoremstyle{definition}

\theoremstyle{theorem}
\newtheorem{theorem}{Theorem}

\theoremstyle{lemma}
\newtheorem{lemma}{Lemma}

\theoremstyle{remark}
\newtheorem{remark}{Remark}

 \theoremstyle{proposition}
\newtheorem{proposition}{Proposition}
\begin{document}

\begin{frontmatter}
\journal{}   
\title{Resilient Continuum Deformation Coordination
\thanksref{footnoteinfo}
} 

\thanks[footnoteinfo]{Authors are with the Department
of Aerospace Engineering, University of Michigan, Ann Arbor,
MI, 48109 USA e-mail: hosseinr@umich.edu.}

\author[Paestum]{Hossein Rastgoftar}
 \author[Rome]{Ella Atkins}

 \address[Paestum]{Department
of Aerospace Engineering, University of Michigan, Ann Arbor,
MI, 48109 USA}  
 \address[Rome]{Department
of Aerospace Engineering, University of Michigan, Ann Arbor,
 MI, 48109 USA
 }             

\begin{keyword}                           
Resilient Multi-agent Coordination, Physics-based Methods, Local Communication, Continuum Deformation, and Decentralized Control.            
\end{keyword}                             

\begin{abstract}                          
This paper applies the principles of continuum mechanics to safely and resiliently coordinate {\color{black}a multi-agent team}. A hybrid automation with two operation modes, Homogeneous Deformation Mode (HDM) and Containment Exclusion Mode (CEM), are developed to robustly manage group coordination in the presence of unpredicted agent failures. HDM becomes active when all agents are healthy, where the group coordination is defined by homogeneous transformation coordination functions. By classifying agents as leaders and followers, a desired $n$-D homogeneous transformation is uniquely related to the desired trajectories of $n+1$ leaders and acquired by the remaining followers in real-time through local communication. The paper offers a novel approach for leader selection as well as naturally establishing  and reestablishing inter-agent communication whenever the agent team enters the HDM. CEM is activated when at least one agent fails to admit group coordination. This paper applies unique features of decentralized homogeneous transformation coordination to  quickly detect each arising anomalous situation and excludes failed agent(s) from group coordination of healthy agents. In CEM, agent coordination is treated as an ideal fluid flow where the desired agents' paths are defined along stream lines inspired by fluid flow field theory to circumvent exclusion spaces surrounding failed agent(s). 
\end{abstract}

\end{frontmatter}
\section{Introduction}
Control of multi-agent systems has been widely investigated over the past two decades. Formation and cooperative control can reduce cost and improve the robustness and capability of reconfiguration in a cooperative mission. Therefore, researchers have been motivated to explore diverse applications for the multi-agent coordination such as formation control \cite{xiao2009finite}, traffic congestion control \cite{wiering2000multi}, distributed sensing, \cite{li2009distributed}, cooperative surveillance \cite{wu2010multi}, and cooperative payload transport \cite{michael2011cooperative}. 
\subsection{Related Work}
Centralized and decentralized cooperative control approaches have been previously proposed for multi-agent coordination. The virtual structure \cite{ren2004decentralized}\cite{ren2002virtual} model treats agents as particles of a rigid body. Assuming the virtual body has an arbitrary translation and rigid body rotation in a $3$-D motion space, the desired trajectory of every agent is determined in a centralized fashion. Consensus \cite{hou2017consensus}\cite{ lin2009consensus}\cite{kim2014leaderless}\cite{defoort2015leader}\cite{papachristodoulou2010effects}\cite{cepeda2011exhaustive}\cite{zuo2016distributed}\cite{lin2016finite}\cite{su2018semi}\cite{yu2015finite} and containment control are the most common decentralized coordination approaches.  
Multi-agent coordination using first-order consensus \cite{liu2011stationary} and second-order consensus \cite{hou2017consensus}\cite{lin2009consensus} has been extensively investigated by researchers in the past.
Leader-based and leaderless consensus have been studied in Refs. \cite{kim2014leaderless} and \cite{defoort2015leader}. Stability of the retarded consensus method was studies in Refs. \cite{papachristodoulou2010effects}\cite{cepeda2011exhaustive}. Finite-time multi-agent consensus of continuous time systems is studied in Refs. \cite{zuo2016distributed}\cite{lin2016finite}. Refs. \cite{su2018semi}\cite{yu2015finite} evaluate consensus  under a switching communication topology in the presence of disturbances. 

More recently, researchers have investigated the resilient consensus problem and provided guarantee conditions for reaching  consensus in the presence of malicious agents \cite{leblanc2012resilient, shang2018resilient, dibaji2017resilient, leblanc2013resilient}. Weighted Mean Subsequence Reduced (W-MSR) is commonly used to detect an adversary and remove malicious agent(s) from the communication network of normal agents \cite{dibaji2017resilient, leblanc2013resilient}. $r$-robustness and $(r,s)$-robustness conditions are used to prove network resilience under consensus. Particularly, $(f+1,f+1)$-robustness is considered as the necessary and sufficient condition for resilience of the consensus protocol in the presence of $f$ malicious agents \cite{dibaji2017resilient}.

Containment control is a decentralized leader-follower approach in which multi-agent coordination is guided by a finite number of leaders and acquired by followers through local communication.  Necessary and sufficient conditions for the stability of continuum deformation coordination have been provided in Refs. \cite{liu2012necessary}. Ref. \cite{ji2008containment} studies the convergence of containment control and demonstrates that followers ultimately converge to the convex hull defined by leaders. Containment control under fixed and switching communication protocols are studied in Refs. \cite{cheng2017event} and \cite{li2015containment}, respectively. Refs. \cite{wang2013distributed, zhao2013distributed} study  finite-time containment stability and convergence. Containment control stability in the presence of communication delay is studied in Refs. \cite{shen2016containment, xiong2018containment}.

Continuum deformation for large-scale coordination of multi-agent systems is developed in \cite{rastgoftar2016continuum}. Similar to containment control, continuum deformation is a leader-follower approach in which a group coordination is guided by a finite number of leaders and acquired by followers through local communication \cite{rastgoftar2014evolution}. Because continuum deformation defines a non-singular mapping between reference and current agent configurations at any time $t$, follower communication weights are consistent with leader agents' reference positions in the continuum deformation coordination. The continuum deformation method advances containment control by formal characterization of safety in a large-scale coordination. Assuming continuum deformation is given by a homogeneous transformation, inter-agent collision and agent follower containment are guaranteed in a continuum deformation coordination by assigning a lower limit on the eigenvalues of the Jacobian matrix of the homogeneous transformation. Therefore, a large number of agents participating in a continuum deformation coordination can safely and aggressively deform to pass through narrow passages in a cluttered environment. 


\subsection{Contributions and Outline}
This paper proposes a physics-inspired approach to the resilient multi-agent coordination problem. In particular, multi-agent coordination is modeled by a hybrid automation with two physics-based coordination modes: (i) Homogeneous Deformation Mode (HDM) and (ii) Containment Exclusion Mode (CEM). 

HDM is active when all agents are healthy and can admit the desired coordination defined by a homogeneous deformation. In the HDM, agents are treated as particles of an $n$-D deformable body and the desired coordination, defined based on the trajectories of $n+1$ leaders forming an $n$-D simplex at any time $t$, is acquired by the followers through local communication.\footnote{This paper considers agents as particles of $2$-D and $3$-D deformable bodies where $2$-D and $3$-D homogeneous deformation defines the desired coordination at the HDM. Therefore, $n$ is either $2$ or $3$.}.  

CEM is activated once an adversarial situation is detected due to unpredicted vehicle or agent failure. The paper offers a novel approach for rapid detection of each anomalous or failed agent and excludes it from group coordination with the healthy vehicles. In CEM the desired coordination is treated as an irrotational fluid flow and adversarial agents are excluded from the safe planning space by combining ideal fluid flow patterns in a computationally-efficient manner. 


%
%
%
%
Compared to the existing literature and the authors' previous work, this paper offers the following contributions:
\begin{enumerate}
\item{The paper offers a novel distributed approach for detection of anomalous situations in which unexpected vehicle failure(s) disrupt collective vehicle motion.}
\item{This paper advances the existing continuum deformation coordination theory by relaxing the follower containment constraint and offering a tetrahedralization approach to assign followers' communication weights in an unsupervised fashion.}
\item{The paper proposes a model-free guarantee condition for convergence and inter-agent collision avoidance in a large-scale homogeneous transformation.}
\item{Compared to existing resilient coordination work \cite{leblanc2012resilient}\cite{shang2018resilient}\cite{dibaji2017resilient}\cite{leblanc2013resilient}, this paper offers a computationally-efficient safety recovery method. At CEM, every agent assigns its own desired trajectory without communication with other agents only by knowing the geometry of the unsafe domains enclosing the anomalous agents, as well as its own reference position when the CEM is activated.}
\item{This paper proposes a tetrahedralization method to (i) naturally establish/reestablish inter-agent communication links and weights, (ii) classify agents as boundary and follower agents, (iii) determine leaders in an unsupervised fashion.}
\item{{The authors believe this is the first paper describing safe exclusion of a failed agent in a cooperative team with inspiration from fluid flow models.}}
\end{enumerate}

This paper is organized as follows.  Preliminaries in Section \ref{Preliminaries} are followed by a resilient continuum deformation formulation in Section \ref{Problem Formulation and Statement}. Physics-based models for the HDM and CEM are described in Section \ref{Physics-based Modeling of the HDM and CEM}. Operation of the resilient continuum deformation coordination is modeled by a hybrid automation in Section \ref{Continuum Deformation Anomaly Management}. Simulation results presented in Section \ref{results} are followed by concluding remarks in Section \ref{conclusion}.
\section{Preliminaries}
\label{Preliminaries}
\subsection{Position Notations}
The following position notations are used throughout this paper:
{\underline{Reference position} of vehicle $i$ is denoted by $\mathbf{r}_{i,0}=\left[x_{i,0}~y_{i,0}~z_{i,0}\right]^T$. \textit{In this paper, a reference configuration is defined based on agents' current positions once they enter the HDM.}}
{\underline{Actual position} of vehicle $i$ is denoted $\mathbf{r}_i(t)=[x_i(t)~y_i(t)~z_i(t)]^T$ at time $t$. }
{\underline{Local desired position} of vehicle $i$ is denoted by $\mathbf{r}_{i,d}(t)=[x_{i,d}(t)~y_{i,d}(t)~z_{i,d}(t)]^T$ at time $t$. $\mathbf{r}_{i,d}$ is updated through local communication and defined based on actual positions of the in-neighbor vehicles of agent $i$.}
{\underline{Global desired position} of vehicle $i$ is denoted by $\mathbf{r}_{i,c}(t)=[x_{i,c}(t)~y_{i,c}(t)~z_{i,c}(t)]^T$ at time $t$. The transient error is defined as the difference between actual position $\mathbf{r}_i(t)$ and global desired position  $\mathbf{r}_{i,c}(t)$ for vehicle $i$ at any time $t$.}

\subsection{{Motion Space Tetrahedralization and $\mathbf{\Lambda}$ Operator}} 
 \label{Motion Space Tetrahedrlization and Operator}
Assume $\mathbf{c}\in\mathbb{R}^{3\times 1}$; $\mathbf{p}_1$, $\mathbf{p}_2$, $\cdots$, $\mathbf{p}_{n+1}\in \mathbb{R}^{3\times 1}$ are $n+1$ arbitrary position vectors in a $3$-D motion space. Then, rank operator $\varkappa_n$ is defined as follows:
\begin{equation}
n=2,3,\qquad    \varkappa_n\left(\mathbf{p}_1,\cdots,\mathbf{p}_n\right)=\mathrm{rank}\left(
    \begin{bmatrix}
        \mathbf{p}_2-\mathbf{p}_1&\cdots&\mathbf{p}_{n+1}-\mathbf{p}_1
    \end{bmatrix}
    \right).
\end{equation}
If $\mathbf{p}_1$, $\cdots$, $\mathbf{p}_{n+1}$ are positioned at the vertices an $n$-D simplex, then $ \varkappa_n\left(\mathbf{p}_1,\cdots,\mathbf{p}_n\right)=n$. $\mathbf{p}_1$, $\mathbf{p}_2$, $\mathbf{p}_3$, and $\mathbf{p}_4$ form a tetrahedron for $n=3$, if $\varkappa_3\left(\mathbf{p}_1,\mathbf{p}_2,\mathbf{p}_3,\mathbf{p}_4\right)=3$. 
$\mathbf{p}_1$, $\mathbf{p}_2$, and $\mathbf{p}_3$ form a triangle for $n=2$, if $\varkappa_2\left(\mathbf{p}_1,\mathbf{p}_2,\mathbf{p}_3\right)=2$. 
\\
\textbf{Operator $\mathbf{\Lambda}$ ($n=2,3$):} Assume $\mathbf{p}_1$, $\mathbf{p}_2$, $\mathbf{p}_3$, and $\mathbf{p}_4^n$ are known points in a $3$-D motion space, where
\begin{equation}
    \varkappa_3\left(\mathbf{p}_1,\mathbf{p}_2,\mathbf{p}_3,\mathbf{p}_4^n\right)=3.
\end{equation}

Then, operator $\mathbf{\Lambda}$ can be defined as follows:
\begin{equation}
n=2,3,\qquad    \mathbf{\Lambda}\left(\mathbf{p}_1,\mathbf{p}_2,\mathbf{p}_3,\mathbf{p}_4^n,\mathbf{c}^n\right)=
    \begin{bmatrix}
        \mathbf{p}_1&\mathbf{p}_2&\mathbf{p}_{3}&\mathbf{p_4}^n\\
        1&1&1&1
    \end{bmatrix}
    ^{-1}\begin{bmatrix}
        \mathbf{c}^n\\
        1
    \end{bmatrix}
    ,
\end{equation}
where $\mathbf{c}^n$ is the position of a point in a $3$-D motion space.
If $\varkappa_3\left(\mathbf{p}_1,\mathbf{p}_2,\mathbf{p}_3,\mathbf{p}_4^n\right)=3$, $\mathbf{\Lambda}\left(\mathbf{p}_1,\cdots,\mathbf{p}_n,\mathbf{c}^n\right)$ exists and has the following properties \cite{rastgoftar2019formal}:
\begin{enumerate}
    \item{The sum of the entries of $\mathbf{\Lambda}$ is $1$ for any configuration of vectors $\mathbf{p}_1,\cdots,\mathbf{p}_{4}^n$ and $\mathbf{c}^n$ for $n=2,3$.}
    \item{If $\mathbf{\Lambda}>\mathbf{0}$, $\mathbf{c}^n$ is inside the tetrahedron formed by $\mathbf{p}_1$, $\cdots$, $\mathbf{p}_{4}^n$. Otherwise, it is outside the tetrahedron.}
\end{enumerate}
\textbf{Motion Space Tetrahedralization}: A $3$-D motion space can be divided into two subspaces that are inside and outside of the tetrahedron defined by vectors $\mathbf{p}_1$, $\mathbf{p}_2$, $\mathbf{p}_3$, and $\mathbf{p}_4^n$, if $\varkappa_3\left(\mathbf{p}_1,\mathbf{p}_2,\mathbf{p}_3,\mathbf{p}_4^n\right)=3$.
\begin{figure}[ht]
\centering
\includegraphics[width=3.3 in]{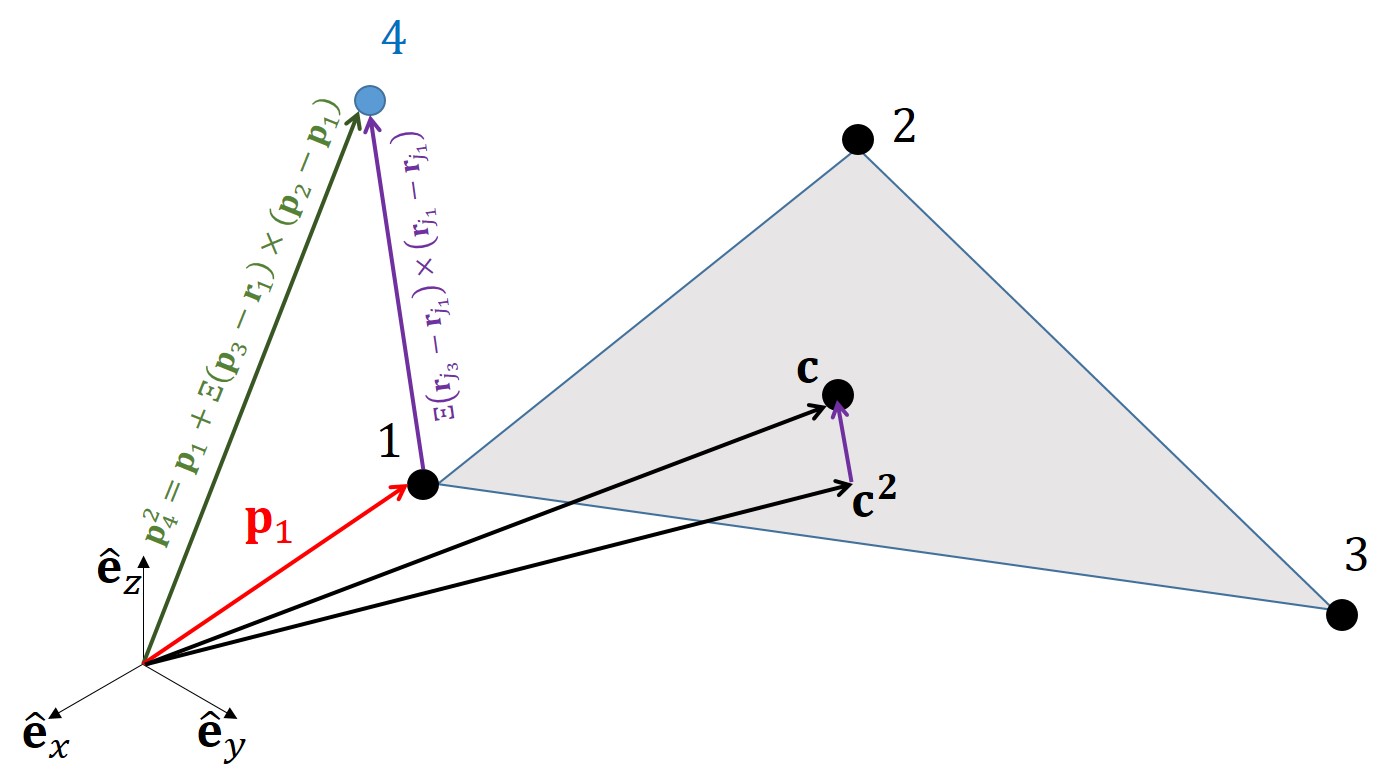}
\caption{Graphical representation of the virtual agent $\mathbf{p}_{4}^2$.  }
\label{Detecton2D}
\end{figure}

\underline{For $n=3$}, $\mathbf{p}_1$, $\mathbf{p}_2$, $\mathbf{p}_3$, $\mathbf{p}_4$, and $\mathbf{c}$ represent real points (agents) in a $3$-D motion space, if  $\varkappa_3\left(\mathbf{p}_1,\mathbf{p}_2,\mathbf{p}_3,\mathbf{p}_4\right)=3$. Therefore, $\mathbf{p}_4^3=\mathbf{p}_4$ and $\mathbf{c}^3=\mathbf{c}$.

\underline{For $n=2$}, $\mathbf{p}_1$, $\mathbf{p}_2$, and $\mathbf{p}_3$ are the real points forming a triangle in a $3$-D motion space. Given $\mathbf{p}_1$, $\mathbf{p}_2$, and $\mathbf{p}_3$, \textbf{virtual point} 
\begin{equation}
\label{virtualpoint}
    \mathbf{p}_{4}^2=\mathbf{p}_{1}+\Xi\left(\mathbf{p}_{3}-\mathbf{p}_{1}\right)\times \left(\mathbf{p}_{2}-\mathbf{p}_{1}\right),
\end{equation}
where $\Xi\neq 0$ is constant. Note that $\mathbf{p}_{4}^2$, defined by Eq. \eqref{virtualpoint}, is perpendicular to the triangular plane made by agents $\mathbf{p}_1$, $\mathbf{p}_2$, and $\mathbf{p}_3$ (see Fig. \ref{Detecton2D}). Consequently, virtual agent $\mathbf{p}_{4}^2$ and in-neighbor agent $\mathbf{p}_1$, $\mathbf{p}_2$, and $\mathbf{p}_3$ form a tetrahedron. The projection of $\mathbf{c}$ on the triangular plane made by $\mathbf{p}_1$, $\mathbf{p}_2$, and $\mathbf{p}_3$ is denoted by $\tilde{\mathbf{c}}^2$ and expressed as follows:
\begin{equation}
\label{projection}
\begin{split}
    {\mathbf{c}}^2={\mathbf{c}}-\left({\mathbf{c}}\cdot\mathbf{n}_{1-4}\left(\mathbf{p}_1,\mathbf{p}_2,\mathbf{p}_3\right)\right)\mathbf{n}_{1-4}\left(\mathbf{p}_1,\mathbf{p}_2,\mathbf{p}_3\right),
\end{split}
\end{equation}
where unit vector
\begin{equation}
\label{n144}
    \mathbf{n}_{1-4}\left(\mathbf{p}_1,\mathbf{p}_2,\mathbf{p}_3\right)=\dfrac{\left(\mathbf{p}_{3}-\mathbf{p}_{1}\right)\times \left(\mathbf{p}_{2}-\mathbf{p}_{1}\right)}{\|\left(\mathbf{p}_{3}-\mathbf{p}_{1}\right)\times \left(\mathbf{p}_{2}-\mathbf{p}_{1}\right)\|}
\end{equation}
is normal to the triangular plane made by $\mathbf{p}_1$, $\mathbf{p}_2$, and $\mathbf{p}_3$.

\begin{proposition}\label{prop11111111111}
Let $\mathbf{\Lambda}$ be expressed in component-wise form:
\[
\mathbf{\Lambda}=\begin{bmatrix}
      \lambda_1&\lambda_2&\lambda_3&\lambda_4
\end{bmatrix}
^T
.
\]
If $n=2$, $\lambda_4\left(\mathbf{p}_1,\mathbf{p}_2,\mathbf{p}_3,\mathbf{p}_4^2,\mathbf{c}^2\right)=0$ for any arbitrary position $\mathbf{c}^2$. 
\end{proposition}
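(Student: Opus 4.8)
The plan is to unpack the definition of $\mathbf{\Lambda}$ in Eq.~\eqref{virtualpoint}'s companion formula (the matrix identity for $\mathbf{\Lambda}$) at $n=2$, and then exploit the single geometric fact that $\mathbf{p}_4^2$ is displaced from the triangular plane exactly along its normal $\mathbf{n}_{1-4}$, whereas $\mathbf{c}^2$, being the orthogonal projection of $\mathbf{c}$ onto that same plane, lies in it. First I would rewrite the matrix equation defining $\mathbf{\Lambda}$ in its equivalent affine form: $\lambda_1\mathbf{p}_1+\lambda_2\mathbf{p}_2+\lambda_3\mathbf{p}_3+\lambda_4\mathbf{p}_4^2=\mathbf{c}^2$ together with $\lambda_1+\lambda_2+\lambda_3+\lambda_4=1$ (the last relation is just the bottom row of the identity, equivalently Property~1). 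The matrix is invertible and $\mathbf{\Lambda}$ exists because $\varkappa_3(\mathbf{p}_1,\mathbf{p}_2,\mathbf{p}_3,\mathbf{p}_4^2)=3$ by the tetrahedron construction. Eliminating $\lambda_1=1-\lambda_2-\lambda_3-\lambda_4$ then gives $\lambda_2(\mathbf{p}_2-\mathbf{p}_1)+\lambda_3(\mathbf{p}_3-\mathbf{p}_1)+\lambda_4(\mathbf{p}_4^2-\mathbf{p}_1)=\mathbf{c}^2-\mathbf{p}_1$.

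Next I would take the dot product of this vector identity with the unit normal $\mathbf{n}_{1-4}(\mathbf{p}_1,\mathbf{p}_2,\mathbf{p}_3)$ from Eq.~\eqref{n144}. The vectors $\mathbf{p}_2-\mathbf{p}_1$ and $\mathbf{p}_3-\mathbf{p}_1$ span the triangular plane, so both are orthogonal to $\mathbf{n}_{1-4}$; and $\mathbf{c}^2-\mathbf{p}_1$ lies in the triangular plane as well, since $\mathbf{c}^2$ is the projection of $\mathbf{c}$ onto it (Eq.~\eqref{projection}), hence is also orthogonal to $\mathbf{n}_{1-4}$. Only the $\lambda_4$ term survives: $\lambda_4\,\big(\mathbf{p}_4^2-\mathbf{p}_1\big)\cdot\mathbf{n}_{1-4}=0$. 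From Eq.~\eqref{virtualpoint}, $\mathbf{p}_4^2-\mathbf{p}_1=\Xi\,(\mathbf{p}_3-\mathbf{p}_1)\times(\mathbf{p}_2-\mathbf{p}_1)=\Xi\,\|(\mathbf{p}_3-\mathbf{p}_1)\times(\mathbf{p}_2-\mathbf{p}_1)\|\,\mathbf{n}_{1-4}$, so $\big(\mathbf{p}_4^2-\mathbf{p}_1\big)\cdot\mathbf{n}_{1-4}=\Xi\,\|(\mathbf{p}_3-\mathbf{p}_1)\times(\mathbf{p}_2-\mathbf{p}_1)\|$, which is nonzero because $\Xi\neq0$ and $\varkappa_2(\mathbf{p}_1,\mathbf{p}_2,\mathbf{p}_3)=2$ forces $(\mathbf{p}_3-\mathbf{p}_1)\times(\mathbf{p}_2-\mathbf{p}_1)\neq\mathbf{0}$. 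Dividing out gives $\lambda_4=0$, and since nothing in the argument restricted $\mathbf{c}$, the conclusion holds for every $\mathbf{c}^2$ so obtained.

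The only step that needs care — and the main (mild) obstacle — is justifying that $\mathbf{c}^2-\mathbf{p}_1$ is orthogonal to $\mathbf{n}_{1-4}$, i.e.\ that $\mathbf{c}^2$ lies in the \emph{affine} triangular plane through $\mathbf{p}_1$, not merely the parallel plane through the origin. This is precisely the intended meaning of ``$\mathbf{c}^2$ is the projection of $\mathbf{c}$ onto the triangular plane''; reading Eq.~\eqref{projection} in that sense (equivalently, subtracting $\mathbf{p}_1$ before projecting along $\mathbf{n}_{1-4}$) makes $(\mathbf{c}^2-\mathbf{p}_1)\cdot\mathbf{n}_{1-4}=0$ hold, and the rest is a two-line computation. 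Once this is settled, the proposition follows immediately.
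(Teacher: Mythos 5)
Your proof is correct, and it takes a genuinely different route from the paper. The paper's proof simply asserts the closed-form expression $\lambda_4=\|\mathbf{c}^n-\mathbf{c}^2\|\,/\,\|\mathbf{p}_4^n-\mathbf{p}_1\|$ without derivation and concludes that $\lambda_4=0$ for $n=2$ because one of the two factors vanishes (the text says the \emph{denominator} is zero, which is evidently a slip for the \emph{numerator}: for $n=2$ one has $\mathbf{c}^n=\mathbf{c}^2$ so the numerator vanishes, while the denominator $\|\mathbf{p}_4^2-\mathbf{p}_1\|=|\Xi|\,\|(\mathbf{p}_3-\mathbf{p}_1)\times(\mathbf{p}_2-\mathbf{p}_1)\|$ is strictly positive). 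You instead go back to the defining linear system, rewrite it as the affine combination $\sum_k\lambda_k\mathbf{p}_k=\mathbf{c}^2$ with $\sum_k\lambda_k=1$, and isolate $\lambda_4$ by taking the inner product with the unit normal $\mathbf{n}_{1-4}$; the in-plane terms drop out and the coefficient of $\lambda_4$ is the nonzero scalar $\Xi\,\|(\mathbf{p}_3-\mathbf{p}_1)\times(\mathbf{p}_2-\mathbf{p}_1)\|$. This buys two things: it is self-contained (the paper's ratio formula is never proved, and as stated it cannot be right in general for $n=3$, since it is unsigned and normalizes by $\|\mathbf{p}_4-\mathbf{p}_1\|$ rather than by the distance of $\mathbf{p}_4$ to the plane), and it surfaces the one genuine subtlety: Eq.~\eqref{projection} as written projects $\mathbf{c}$ onto the plane through the \emph{origin} normal to $\mathbf{n}_{1-4}$, so $(\mathbf{c}^2-\mathbf{p}_1)\cdot\mathbf{n}_{1-4}=-\mathbf{p}_1\cdot\mathbf{n}_{1-4}$ need not vanish and the proposition would literally fail unless the triangular plane contains the origin. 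Your reading --- that the intended $\mathbf{c}^2$ is the affine projection onto the plane through $\mathbf{p}_1$, $\mathbf{p}_2$, $\mathbf{p}_3$, i.e.\ $\mathbf{c}^2=\mathbf{c}-\left(\left(\mathbf{c}-\mathbf{p}_1\right)\cdot\mathbf{n}_{1-4}\right)\mathbf{n}_{1-4}$ --- is the correct repair and is consistent with the paper's verbal description; with it your two-line computation closes the argument.
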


\begin{proof}
Given $\mathbf{p}_1$, $\mathbf{p}_2$, $\mathbf{p}_3$, and $\mathbf{p}_4^n$, $\lambda_4\left(\mathbf{p}_1,\mathbf{p}_2,\mathbf{p}_3,\mathbf{p}_4^n,\mathbf{c}^n\right)$ is obtained as follows:
\begin{equation}
\label{lambda4}
    \lambda_4\left(\mathbf{p}_1,\mathbf{p}_2,\mathbf{p}_3,\mathbf{p}_4^n,\mathbf{c}^n\right)=\dfrac{\|\mathbf{c}^n-\mathbf{c}^2\|}{\|\mathbf{p}_4^n-\mathbf{p}_1\|}.
\end{equation}
For $n=2$, the denominator of Eq. \eqref{lambda4} is $0$, thus $\lambda_4\left(\mathbf{p}_1,\mathbf{p}_2,\mathbf{p}_3,\mathbf{p}_4^n,\mathbf{c}^n\right)=0$ for any arbitrary position of point $\mathbf{c}$ in the motion space.
\end{proof}
Operator $\mathbf{\Lambda}$ will be used to (i) determine boundary and interior agents, (ii) specify followers' in-neighbor agents, (iii) assign followers' communication weights in a $2$-D and $3$-D homogeneous deformation coordination, and (iv) detect anomalies in a group coordination.

\subsection{Homogeneous Deformation}
\label{Homogeneous Deformation}
A homogeneous deformation is an affine transformation\footnote{The affine transformation \eqref{HT} is called \textit{Homogeneous Deformation} in continuum mechanics \cite{lai2009introduction}.} given by
\begin{equation}
\label{HT}
    \mathbf{r}_{i,c}(t)=\mathbf{Q}(t)\mathbf{r}_{i,0}+\mathbf{d}(t),
\end{equation}
where $\mathbf{Q}(t)$ is the Jacobian matrix, $\mathbf{d}(t)$ is the rigid-body displacement vector, $\mathbf{r}_{i,0}$ is the reference position of agent $i\in \mathcal{V}_H(t)$, and $\mathcal{V}_H(t)$ defines index numbers of healthy agents at time $t$.

\underline{\textit{$\alpha$ parameters}:} Let $\mathcal{V}_H(t)$ be expressed as 
\begin{equation}
\label{HLF}
    \mathcal{V}_H=\mathcal{V}_L\bigcup \mathcal{V}_F.
\end{equation}
where $\mathcal{V}_L=\{i_1,\cdots,i_{n+1}\}$ and $\mathcal{V}_F(t)=\{i_{n+2},\cdots,i_{N_F}\}$ are disjoint sets defining leaders and followers at time $t$. Let $\mathbf{r}_{i_1,0}$, $\cdots$, $\mathbf{r}_{i_{n+1},0}$ denote the reference positions of the leaders and $\mathbf{r}_{i_j,0}$ denotes the reference position of follower $i_j\in \mathcal{V}$, where reference positions are all assigned at the time agents first enter HDM. Then, we can define $\alpha$ parameters $\alpha_{i_j,i_1}$ through $\alpha_{i_j,i_4}$ as follows:
\begin{equation}
    \begin{bmatrix}
          \alpha_{i_j,i_1}&\cdots&\alpha_{i_j,i_4}
    \end{bmatrix}
    ^T
    =\mathbf{\Lambda}\left(\mathbf{r}_{i_1,0},\mathbf{r}_{i_2,0},\mathbf{r}_{i_3,0},\mathbf{r}_{i_4,0}^n,\mathbf{r}_{i_j,0}^n\right),
\end{equation}
where
\begin{subequations}
 \label{9aaa}
 \begin{equation}
     \mathbf{r}_{i_4,0}^n=
     \begin{cases}
     \mathbf{r}_{i_4,0}&n=3\\
     \mathbf{r}_{i_1,0}+\Xi\left(\mathbf{r}_{i_3,0}-\mathbf{r}_{i_1,0}\right)\times \left(\mathbf{r}_{i_2,0}-\mathbf{p}_{i_1,0}\right)&n=2
     \end{cases}
     ,
 \end{equation}
 \begin{equation}
     \mathbf{r}_{i_j,0}^n=
     \begin{cases}
     \mathbf{r}_{i_j,0}&n=3\\
      {\mathbf{r}}_{i_j,0}-\left({\mathbf{r}}_{i_j,0}\cdot\mathbf{n}_{1-4}\right)\mathbf{n}_{1-4}&n=2
     \end{cases}
     ,
 \end{equation}
\end{subequations}
and $\mathbf{n}_{1-4}=\mathbf{n}_{1-4}\left(\mathbf{r}_{i_1,0},\mathbf{r}_{i_2,0},\mathbf{r}_{i_3,0}\right)$ was previously defined in \eqref{n144}. Note that $\alpha_{i_1,i_4}=0$, if $n=2$ (See Proposition \ref{prop11111111111}).

\underline{\textit{Global Desired Position:}} Because homogeneous deformation is a linear transformation, global desired position of vehicle $i_j$ can be either given by Eq. \eqref{HT} or expressed as a convex combination of the leaders' positions at any time $t$. \begin{equation}
\label{GloBDesPos}
    j\in \mathcal{V}_F,\qquad \mathbf{r}_{j,c}=\sum_{k=1}^{n+1}\alpha_{j,i_k}\mathbf{r}_{i_k,c}(t).
\end{equation}

\section{Problem Formulation and Statement}\label{Problem Formulation and Statement}
Consider a $3$-D motion space containing $M$ agents where every agent is uniquely identified by a number $i\in \mathcal{M}= \left\{1,\cdots,M\right\}$. It is assumed that $N(t)$ (out of $M$) agents are enclosed by a rigid-size \textit{containment domain} 
\begin{equation}
\label{containmentdoamin}
    \Omega_{\mathrm{con}}=\Omega_{\mathrm{con}}(\mathbf{r},\mathbf{r}_{\mathrm{con}}(t))\subset \mathbb{R}^3,
\end{equation}
at time $t$. Let  $\mathbf{r}_{\mathrm{con}}(t)\in \mathbb{R}^{3\times 1}$ be the nominal position of the containment domain given by
\begin{equation}
    \mathbf{r}_{\mathrm{con}}(t)=\sum_{i=1}^{N(t)}\beta_i\mathbf{r}_{i,c}(t).
\end{equation}
Note that $0\leq\beta_i<1$ is a scaling factor, $\sum_{i=1}^{N(t)}\beta_i=1$, and the size of  $\Omega_{\mathrm{con}}=\Omega_{\mathrm{con}}(\mathbf{r},\mathbf{r}_{\mathrm{con}}(t))\subset \mathbb{R}^3$ does not change over time. Identification numbers of the agents enclosed by $\Omega_{\mathrm{cont}}(t)$ are defined by set 
\begin{equation}
    \mathcal{V}(t)=\left\{i\in \mathcal{M}\big|\mathbf{r}_{i,c}(t)\in \Omega_{\mathrm{con}}(\mathbf{r},\mathbf{r}_{\mathrm{con}}(t))\right\}
\end{equation}
Agents enclosed by the containment region $\Omega_{\mathrm{con}}(\mathbf{r},\mathbf{r}_{\mathrm{con}}(t))$ can be classified as \textit{healthy} or \textit{anomalous} agents, where \textit{healthy} agents admit the group desired coordination while \textit{anomalous} agents do not.
\textit{Healthy} and \textit{anomalous} vehicles are defined by disjoint sets $\mathcal{V}_H$ and $\mathcal{V}_A$, respectively, where  $\mathcal{V}$ can be expressed as 
\begin{equation}
    \mathcal{V}=\mathcal{V}_H\bigcup \mathcal{V}_A,
\end{equation}
where $\mathcal{V}_H=\left\{i_1,\cdots,i_{N_F}\right\}$ and $\mathcal{V}_A=\{i_{N_F+1},\cdots,i_N\}$. 

This paper treats agents as particles of a deformable body where the desired trajectory of vehicle $j\in \mathcal{V}$ is given by 
\begin{equation}
\label{MAINNNNN}
    \dot{\mathbf{r}}_{j,c}\left(x_{j,c},y_{j,c},z_{j,c},t\right)=\mathbf{H}_{j,\gamma}\left(x_{j,c},y_{j,c},z_{j,c}\right)\dot{\mathbf{q}}_\gamma(t),
\end{equation}
where $\gamma$ is a discrete variable defined by finite set $\Gamma=\{\mathrm{CEM},\mathrm{HDM}\}$. Set $\Gamma$ specifies the collective motion operation mode. $\mathbf{r}_{j,c}(t)=[x_{j,c}(t)~y_{j,c}(t)~z_{j,c}(t)]^T$ is the global desired trajectory of vehicle $j$, $\mathbf{q}=\left[q_{1,\gamma}~\cdots~q_{m,\gamma}^\gamma(t)\right]^T$, $q_{1,\gamma}(t)$ through $q_{N,\gamma}(t)$ are the \textit{\textbf{generalized coordinates}} specifying the temporal behavior of the group coordination. Furthermore,
\[
j\in \mathcal{V},~\gamma\in \Gamma,\qquad     \mathbf{H}_{j,\gamma}=
    \begin{bmatrix}
    \mathbf{h}_{j,1,\gamma}&\cdots&\mathbf{h}_{j,m,\gamma}
    \end{bmatrix}
    \in \mathbb{R}^{3\times m}
\]
is the spatially-varying \textbf{\textit{shape matrix}}.  $\mathbf{h}_{j,1,\gamma}(x_{j,c},y_{j,c},z_{j,c})\in \mathbb{R}^{3\times1}$ through $\mathbf{h}_{j,m,\gamma}(x_{j,c},y_{j,c},z_{j,c})\in \mathbb{R}^{3\times 1}$ are the \textbf{\textit{shape functions}}. 

\textbf{HDM} ($\gamma=\mathrm{HDM}$) is active when $\mathcal{V}_A=\emptyset$. Therefore, $N_F(t)=N(t)$ agents defined by set $\mathcal{V}_H$  are all healthy. The HDM shape matrix $\mathbf{H}_{j,\mathrm{HDM}}$ is time-invariant (constant), where $j\in \mathcal{V}_H$. The HDM generalized coordinate vector $\mathbf{q}_{_\mathrm{HDM}}\in \mathbb{R}^{3\left(n+1\right)\times 1}$ specifies desired velocity components of all leaders guiding the group continuum deformation coordination. This paper develops a decentralized leader-follower approach using the tetrahdralization presented in Section \ref{Motion Space Tetrahedrlization and Operator}. By classifying agents as leaders and followers, $\mathbf{V}_H=\mathcal{V}_L\bigcup\mathcal{V}_F$ (See Eq. \eqref{HLF}).
Leaders, defined by $\mathcal{V}_L=\{i_1,\cdots,i_{n+1}\}$, move independently. Followers, defined by $\mathcal{V}_F=\{i_{n+2},\cdots,i_N\}$, acquire the desired coordination through local communication with leaders and other followers. The paper offers a tetrahedralization method to determine leaders and followers and define inter-agent communication among vehicles in an unsupervised fashion for an arbitrary reference configuration of agents. 

\textbf{CEM} ($\gamma=\mathrm{CEM}$) is activated once at least one anomalous agent is detected in which case $\mathcal{V}_A\neq \emptyset$. The CEM shape matrix $\mathbf{H}_{j,\mathrm{CEM}}$ ($j\in \mathcal{V}_H$) is spatially varying. In particular, the desired vehicle coordination of healthy vehicle $j\in \mathcal{V}_H$ is defined by an ideal fluid flow. For CEM, it is desired that (i) vehicle $j\in \mathcal{V}_H$ moves along the surface $z_{j,c}=z_j\left(x_{j,c},y_{j,c},t\right)$ and (ii) $x$ and $y$ components of the agent coordination are defined by an irrotational flow. Mathematically speaking, we define coordinate transformation
\begin{equation}
\label{CoordinateTransformation}
    \begin{cases}
    \phi_{j,c}=&\phi\left(x_{j,c},y_{j,c},t\right)\\
    \psi_{j,c}=&\psi\left(x_{j,c},y_{j,c},t\right)\\
    z_{j,c}=&z_j\left(x_{j,c},y_{j,c},t\right)
    \end{cases}
    ,
\end{equation}
 where $j\in \mathcal{V}_H$, $\phi(x_{j,c},y_{j,c},t)$ and $\psi(x_{j,c},y_{j,c},t)$ satisfy the Laplace equation: 
 \begin{subequations}
\begin{equation}
    \dfrac{\partial ^2 \phi\left(x_{j,c},y_{j,c},t\right)}{\partial x_{j,c}^2}+\dfrac{\partial ^2 \phi\left(x_{j,c},y_{j,c},t\right)}{\partial y_{j,c}^2}=0
\end{equation}
\begin{equation}
    \dfrac{\partial ^2 \psi\left(x_{j,c},y_{j,c},t\right)}{\partial x_{j,c}^2}+\dfrac{\partial ^2 \psi\left(x_{j,c},y_{j,c},t\right)}{\partial y_{j,c}^2}=0.
\end{equation}
\end{subequations}

For smooth "flow" every agent $j\in \mathcal{V}_H$ slides along the $j$-th streamline defined by 
$
j\in \mathcal{V}_H,\qquad \psi\left(x_{j,c},y_{j,c},t\right)=\psi_{j,0}=\mathrm{constant}
$
at any time $t$. 
This condition requires that the  desired trajectory of vehicle $j\in \mathcal{V}_H$  satisfy the following equation at any time $t$:
\begin{equation}
\label{StickingCondition}
    \dfrac{\partial \psi(x_{j,c},y_{j,c},t)}{\partial x_{j,c}}\dfrac{dx_{j,c}}{dt}+\dfrac{\partial \psi(x_{j,c},y_{j,c},t)}{\partial y_{j,c}}\dfrac{dy_{j,c}}{dt}=0.
\end{equation}
Notice that stream and potential functions satisfy the Cauchy-Riemann condition. Therefore, the level curves $\phi(x,y,t)=\mathrm{constant}$ and $\psi(x,y,t)=\mathrm{constant}$ are perpendicular at the intersection point. This paper defines $\phi(x,y,t)$ and $\psi(x,y,t)$ by combining ideal fluid flow patterns so that an obstacle-free motion space is excluded from adversaries. This combination can split the $x-y$ plane into a safe region defined by set $\mathcal{S}$ and unsafe region defined by set $\mathcal{U}$. A one-to-one mapping exists between $(x_j,y_j)$ and $(\phi(x_j,y_j),\psi(x_j,y_j))$ at every point in the safe set $\mathcal{S}$. Thus, the Jacobian matrix
\begin{equation}
\label{Jaaaacobian}
(x,y)\in \mathcal{S},~j\in \mathcal{V}_H\qquad 
    \mathbf{J}(x_j,y_j)=
    \begin{bmatrix}
    \dfrac{\partial \phi}{\partial x_j}&\dfrac{\partial \phi}{\partial y_j}\\
    \dfrac{\partial \psi}{\partial x_j}&\dfrac{\partial \psi}{\partial y_j}\\
    \end{bmatrix}
\end{equation}
is nonsigular. Potential and stream fields are generated by combining ``Uniform'' and ``Doublet'' flow patterns. As a result, a single failed vehicle can be separated by a cylinder from the safe region $\mathcal{S}$  in the motion space. 

This paper also offers a novel distributed anomaly detection approach by using the properties of leader-follower homogeneous transformation coordination. Particularly, the $\mathbf{\Lambda}$ operator is used to characterize agent deviation of agents from the desired coordination to quickly identify failed agent(s) that are not admitting the desired continuum deformation.

\section{Physics-based Modeling of HDM and CEM}
\label{Physics-based Modeling of the HDM and CEM}
HDM and CEM are mathematically modeled in this section. A decentralized leader follower method for HDM is developed in Section \ref{Homogeneous Deformation Mode} to acquire a desired continuum deformation in an unsupervised fashion. CEM coordination is modeled in Section \ref{Containment  Exclusion Mode (CEM)}. 
\subsection{Homogeneous Deformation Mode (HDM)}
\label{Homogeneous Deformation Mode}
In HDM, vehicles are healthy and cooperative. Therefore, $\left|\mathcal{V}_A\right|=0$ ($N_F=N$ and $\mathcal{V}_H=\mathcal{V}$).  Set $\mathcal{V}_{H}$ can be expressed as $\mathcal{V}_{H}=\mathcal{V}_L\bigcup\mathcal{V}_F$, where $\mathcal{V}_L=\{i_1,\cdots,i_{{n+1}}\}$ and $\mathcal{V}_F=\{i_{n+2},\cdots,i_{N}\}$ define leaders and followers, respectively. 

\subsubsection{{{Desired Homogeneous Deformation Definition}}} 
A desired homogeneous transformation can be defined by $m=3(n+1)$ generalized coordinates $q_{1,\mathrm{HDM}}$, $\cdots$, $q_{3(n+1),\mathrm{HDM}}$ using relation \eqref{MAINNNNN}, where
\begin{subequations}
\begin{equation}
\mathbf{q}_{\mathrm{HDM}}(t)=
    \begin{bmatrix}
    q_{1,\mathrm{HDM}}(t)\\
    \vdots\\
    q_{3(n+1),\mathrm{HDM}}(t)
    \end{bmatrix}
    =\mathrm{vec}\left(
    \begin{bmatrix}
    x_{i_1,c}&\cdots&x_{i_{n+1},c}\\
    y_{i_1,c}&\cdots&y_{i_{n+1},c}\\
    z_{i_1,c}&\cdots&z_{i_{m+1},c}\\
    \end{bmatrix}^T
    \right),
\end{equation}
\begin{equation}
\begin{split}
\mathbf{H}_{i_j,\mathrm{HDM}}=&
    \begin{bmatrix}
    \mathbf{h}_{j,1}&\cdots&\mathbf{h}_{j,3(n+1)}
    \end{bmatrix}
    =\mathbf{I}_3\otimes 
    \begin{bmatrix}
    \alpha_{j,i_1}&\cdots&\alpha_{j,i_{n+1}}
    \end{bmatrix},
\end{split}
\end{equation}
\end{subequations}
and $j\in \mathcal{V}$. Note that $\mathrm{vec}(\cdot)$ is the matrix vectorization operator.
In Section \ref{Homogeneous Deformation}, it was described how $\alpha$ parameters $\alpha_{j,i_1}$ through $\alpha_{j,i_{n+1}}$ are assigned based on the agents' reference positions.

\subsubsection{Unsupervised Acquisition of a Homogeneous Deformation Using Tetrahedralization}
\label{Natural Acquisition of a Homogeneous Deformation}
A desired homogeneous deformation, defined by $n+1$ leaders in an $n$-D homogeneous deformation, is acquired by followers through local communication. Communication among healthy agents is defined by \textbf{coordination graph} $\mathcal{G}_c\left(\mathcal{V},\mathcal{E}\right)$ with nodes $\mathcal{V}$ ($\mathcal{V}=\mathcal{V}_H$) and edges $\mathcal{E}\subset \mathcal{V}\times \mathcal{V}$. In-neighbor agents of agent $i\in \mathcal{V}$ are defined by
\[
i\in \mathcal{V},\qquad \mathcal{N}_i=\{j\big|(j,i)\in \mathcal{E}
\}.
\]
Assuming the reference formation of agents is known, $n+1$ boundary agents are selected as leaders. Furthermore, every follower $i\in \mathcal{V}_F$ communicates with $n+1$ in-neighbor agents where the in-neighbor agents are placed at the vertices of an $n$-simplex containing follower $i$. 
\begin{figure}[ht]
\centering
\includegraphics[width=3.3 in]{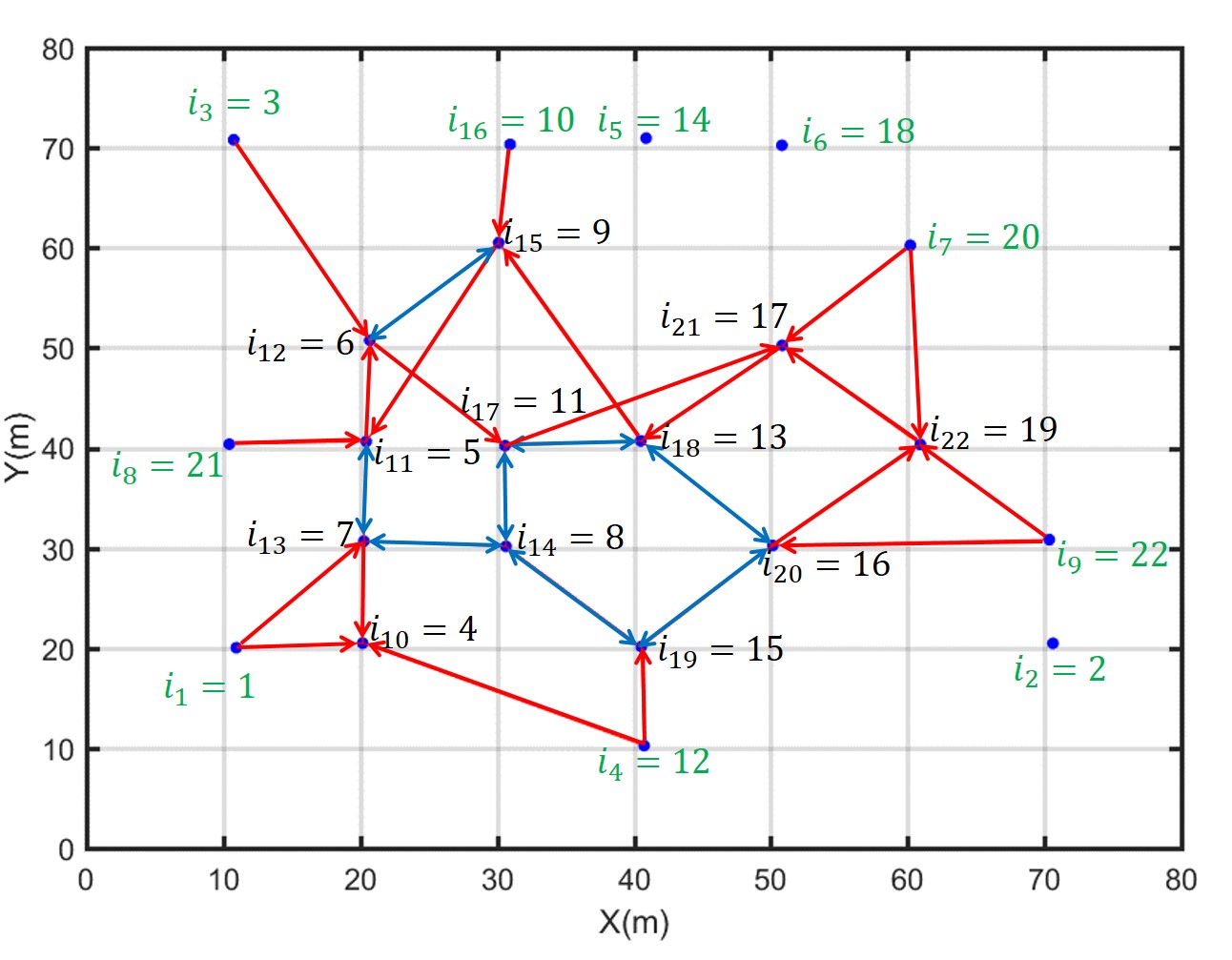}
\caption{Example reference formation used for collective motion simulation. A red arrow shows a unidirectional link to a follower from its in-neighbor agent. Blue arrows show bidirectional communication.}
\label{InitialFormation}
\end{figure}
\\

\subsubsection{{Classification of Agents as Leaders and Followers}} The node set $\mathcal{V}$ can be expressed as $\mathcal{V}=\mathcal{V}_B\bigcup\mathcal{V}_I$ where $\mathcal{V}_B=\{i_1,\cdots,i_{m_B}\}$ and $\mathcal{V}_I=\{i_{m_B+1},\cdots,i_{N}\}$ define boundary and interior agents, respectively.
 Given agents' reference positions, the following true statements are used to assign agent $i\in \mathcal{V}$ either as a leader or a follower:
\begin{enumerate}
\item{An $n$-D homogeneous deformation is defined by $n+1$ leaders \cite{rastgoftar2016continuum, rastgoftar2019multi}. Assuming leaders are selected from the boundary agents, 
    $\mathcal{V}_L\subset \mathcal{V}_B$. }
    \item{Non-leader boundary agents are the followers specified by $\left(\mathcal{V}_B\setminus \mathcal{V}_L\right)\subset \mathcal{V}_F$.}   
    \item{All interior agents are followers, thus, $\mathcal{V}_I\subset \mathcal{V}_F$.}
    \item{Agent $i$ is an interior agent and classified as a follower, if there exists a set of three agents $j_1$ $j_2$, and $j_{n+1}$ such that $\mathbf{\Lambda}\left(\mathbf{r}_{j_1,0},\mathbf{r}_{j_2,0},\mathbf{r}_{j_3,0},\mathbf{r}_{j_4,0}^n,\mathbf{r}_{i,0}^n\right)>0$, where $\mathbf{r}_{j_4,0}^n$ and $\mathbf{r}_{i,0}^n$ are assigned by Eq. \eqref{9aaa} when subscripts $i_1$, $i_2$, $i_3$, $i_4$, and $i_j$ are substituted by $j_1$, $j_2$, $j_3$, $j_4$, and $i$, respectively \cite{rastgoftar2019formal}.
    }
     \item{Assume $\mathbf{\Lambda}\left(\mathbf{r}_{j_1,0},\mathbf{r}_{j_2,0},\mathbf{r}_{j_3,0},\mathbf{r}_{j_4,0}^n,\mathbf{r}_{i,0}^n\right)$ has at least one negative entry for every $j_1,\cdots,j_{n+1}\in \mathcal{V}$ forming an $n$-D simplex, where $j_1\neq i$, $\cdots$, and $j_{n+1}\neq i$. Then, $i\in \mathcal{V}$ is a boundary agent \cite{rastgoftar2019formal}.}
    \item{If $i\in \mathcal{V}$ is not a follower agent, it is a boundary agent.}
    \item{Any $n+1$ boundary agents $j_1$, $\cdots$, $j_{n+1}$ can be selected as leaders. The remaining boundary agents are also considered as the followers.}
\end{enumerate}
To better clarify the above statements, the reference formation shown in Fig. \ref{InitialFormation} is considered. The vehicle team consists of $22$ agents ($\mathcal{V}=\{1,\cdots,22\}$). Set $\mathcal{V}_B=\{i_1,\cdots,i_{10}\}$ define the boundary agents, where $i_1=1$, $i_2=2$, $i_3=3$, $i_4=10$, $i_5=12$, $i_6=14$, $i_7=18$, $i_8=20$, $i_9=21$, $i_{10}=22$. While $\mathcal{V}_L=\{i_1,i_2,i_3\}$ specifies the leaders, $\mathcal{V}_B\subset \mathcal{V}_L$ defines the boundary followers. Boundary followers all communicate with leaders $1$, $2$, and $3$. Note that links from leaders $i_1$, $i_2$, and $i_3$ to boundary followers are not shown in Fig. \ref{InitialFormation}. Additionally, $\mathcal{V}_I=\{i_{11},\cdots,i_{22}\}$ defines interior agents, where $i_{11}=4$, $i_{12}=5$, $i_{13}=6$, $i_{14}=7$, $i_{15}=8$, $i_{16}=9$, $i_{17}=11$, $i_{18}=13$, $i_{19}=15$, $i_{20}=16$, $i_{21}=17$, and $i_{22}=19$ are the interior vehicles. Note that $\mathcal{V}_I\subset \mathcal{V}_F$ are all followers.

{\subsubsection{Followers' In-Neighbors, Communication Weights, and HDM Desired Trajectories}} 
The agent-tetrahedralization is used in this section to determine in-neighbor agents of interior followers in a homogeneous deformation coordination.
For every \textbf{\underline{interior} follower agent} $h\in\mathcal{V}_I\subset \mathcal{V}_F$, let
\begin{equation}
\resizebox{0.99\hsize}{!}{%
$\mathcal{F}_h=\left\{\left(j_1,\cdots,j_{n+1}\right)\in \underbrace{\mathcal{V}\times\cdots\times \mathcal{V}}_{n+1~\mathrm{times}}\bigg|\mathbf{\Lambda}\left(\mathbf{r}_{j_1,0},\cdots,\mathbf{r}_{j_{4},0}^n,\mathbf{r}_{h,0}^n\right)>\varrho_n\mathbf{1}_4\right\}$
}
\end{equation}
define admissible $n$-D simplexes enclosing interior follower $h$, where $\mathbf{1}_4\in \mathbb{R}^{4\times 1}$ is the one-entry vector and $\varrho>0$ is constant. 
{\color{black}
\begin{proposition}
Positive parameter $\varrho_n $ must be less than $\frac{1}{n+1}$ in an $n$-D homogeneous deformation ($n=2,3$).
\end{proposition}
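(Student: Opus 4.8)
The plan is to reduce the claim to one scalar fact already available, namely that the four entries of $\mathbf{\Lambda}$ always sum to one (the first listed property of the $\mathbf{\Lambda}$ operator in Section~\ref{Motion Space Tetrahedrlization and Operator}). Fix an interior follower $h$ and suppose the tetrahedralization assigns it an admissible in-neighbor simplex, i.e. $\mathcal{F}_h\neq\emptyset$; pick any $(j_1,\dots,j_{n+1})\in\mathcal{F}_h$ and write $\mathbf{\Lambda}\left(\mathbf{r}_{j_1,0},\cdots,\mathbf{r}_{j_4,0}^n,\mathbf{r}_{h,0}^n\right)=\begin{bmatrix}\lambda_1&\lambda_2&\lambda_3&\lambda_4\end{bmatrix}^T$. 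Membership in $\mathcal{F}_h$ means $\lambda_k>\varrho_n$ for every relevant index $k$, and Property~1 of $\mathbf{\Lambda}$ gives $\lambda_1+\lambda_2+\lambda_3+\lambda_4=1$. The argument then branches on $n$.

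First I would do $n=3$: here all four $\lambda_k$ are genuine barycentric coordinates of $\mathbf{r}_{h,0}$ with respect to the tetrahedron $\mathbf{r}_{j_1,0},\dots,\mathbf{r}_{j_4,0}$, and $\mathbf{\Lambda}>\varrho_3\mathbf{1}_4$ forces $\lambda_k>\varrho_3$ for $k=1,2,3,4$; summing these four strict inequalities and using $\sum_{k=1}^{4}\lambda_k=1$ yields $1>4\varrho_3$, i.e. $\varrho_3<\tfrac{1}{4}=\tfrac{1}{n+1}$. For $n=2$, Proposition~\ref{prop11111111111} gives $\lambda_4=0$ identically, so the substantive content of the defining inequality is $\lambda_1,\lambda_2,\lambda_3>\varrho_2$, while Property~1 collapses to $\lambda_1+\lambda_2+\lambda_3=1$; summing the three strict inequalities gives $1>3\varrho_2$, i.e. $\varrho_2<\tfrac{1}{3}=\tfrac{1}{n+1}$. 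Combining the two cases, $\varrho_n<\tfrac{1}{n+1}$ whenever an admissible in-neighbor simplex exists, which is exactly what the construction of Section~\ref{Homogeneous Deformation Mode} requires of every interior follower.

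The only place the two cases genuinely differ — and the one point needing care — is the degenerate fourth entry in the planar case: read literally, $\mathbf{\Lambda}>\varrho_2\mathbf{1}_4$ would demand $\lambda_4=0>\varrho_2>0$, an impossibility, so for $n=2$ the inequality must be interpreted on the $n+1$ nontrivial entries (equivalently, against $\varrho_2\mathbf{1}_{n+1}$ after the projection in Eq.~\eqref{projection}). Once that reading is fixed the proof is just the summation above; no optimization over the choice of simplex is needed, since the bound drops out of a single membership in $\mathcal{F}_h$. I would optionally close with the tightness remark that placing $\mathbf{r}_{h,0}$ at the centroid of a simplex makes every $\lambda_k$ equal to $\tfrac{1}{n+1}$, so any $\varrho_n<\tfrac{1}{n+1}$ is attainable for a suitable reference configuration — confirming $\tfrac{1}{n+1}$ as the sharp threshold rather than merely an upper bound.
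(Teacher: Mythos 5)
Your proof is correct and follows essentially the same route as the paper's: both use the fact that the entries of $\mathbf{\Lambda}$ sum to one (with $\lambda_4=0$ when $n=2$ by Proposition~\ref{prop11111111111}) and then sum the $n+1$ strict lower bounds $\lambda_l>\varrho_n$ coming from membership in $\mathcal{F}_h$ to get $(n+1)\varrho_n<1$. Your side remark that, read literally, $\mathbf{\Lambda}>\varrho_2\mathbf{1}_4$ is unsatisfiable for $n=2$ because $\lambda_4=0$ is a fair catch --- the paper's proof silently restricts the inequality to the first $n+1$ entries, exactly as you propose.
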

\begin{proof}
For $n-D$ homogeneous transformation,
\[
\begin{split}
&\mathbf{1}_4^T\mathbf{\Lambda}\left(\mathbf{r}_{j_1,0},\cdots,\mathbf{r}_{j_{4},0}^n,\mathbf{r}_{h,0}^n\right)=\sum_{l=1}^4\lambda_l\left(\mathbf{r}_{j_1,0},\cdots,\mathbf{r}_{j_{4},0}^n,\mathbf{r}_{h,0}^n\right)=\\
&\sum_{l=1}^{n+1}\lambda_l\left(\mathbf{r}_{j_1,0},\cdots,\mathbf{r}_{j_{4},0}^n,\mathbf{r}_{h,0}^n\right)=1
\end{split}
\]
and 
\[
\varrho_n\leq \lambda_l\left(\mathbf{r}_{j_1,0},\cdots,\mathbf{r}_{j_{4},0}^n,\mathbf{r}_{h,0}^n\right)
\]
for $l=1,\cdots,n+1$, if $\left(j_1,j_2,j_3,j_4\right)\in \mathcal{F}_h$. Thus,
\[
(n+1)\varrho_n<\sum_{l=1}^{n+1}\lambda_l\left(\mathbf{r}_{j_1,0},\cdots,\mathbf{r}_{j_{4},0}^n,\mathbf{r}_{h,0}^n\right)=1
\]
which in turn implies that $\varrho_n<{\frac{1}{n+1}}$.
\end{proof}
}
In-neighbors of an interior follower $h\in \mathcal{V}_I$ is defined by set $\mathcal{N}_h=\{j_{1}^*,\cdots j_{n+1}^*\}$, where
\[
(j_1^*,\cdots,j_{n+1}^*)=\argmin\limits_{(j_1,\cdots,j_{n+1})\in \mathcal{F}_h}\sum_{k=1}^{n+1}\|\mathbf{r}_{j_k,0}-\mathbf{r}_{h,0}\|.
\]
In other words, the $n+1$ closest agents belonging to set $\mathcal{F}_h$ are considered as the in-neighbors of follower $h\in \mathcal{V}_F$. 

Every \textbf{\underline{boundary} follower agent} $j\in \mathcal{V}_B\setminus \mathcal{V}_L$ communicates with $n+1$ leaders defined by $\mathcal{V}_L$. Therefore, $\mathcal{N}_j=\mathcal{V}_L$ defines in-neighbor agent of vehicle $j\in \mathcal{V}_L\subset \mathcal{V}_B$. 

\textit{\underline{Followers' Communication Weights:}} Each communication weight $w_{i,j_k}$ ($k=1,\cdots,n+1$) is specified based on reference positions of follower vehicle $i\in \mathcal{V}_F$ and in-neighbor vehicle $j_k\in \mathcal{N}_i=\{j_1,\cdots,j_{n+1}\}$ as follows:
\begin{equation}
\label{ComWeights}
    \begin{bmatrix}
          w_{i,j_1}&\cdots&w_{i,j_4}
    \end{bmatrix}
    ^T
    =\mathbf{\Lambda}\left(\mathbf{r}_{j_1,0},\mathbf{r}_{j_2,0},\mathbf{r}_{j_3,0},\mathbf{r}_{j_4,0}^n,\mathbf{r}_{i,0}^n\right),
\end{equation}
where
\begin{subequations}
 \label{9aaab}
 \begin{equation}
     \mathbf{r}_{j_4,0}^n=
     \begin{cases}
     \mathbf{r}_{j_4,0}&n=3\\
     \mathbf{r}_{j_1,0}+\Xi\left(\mathbf{r}_{j_3,0}-\mathbf{r}_{j_1,0}\right)\times \left(\mathbf{r}_{j_2,0}-\mathbf{p}_{j_1,0}\right)&n=2
     \end{cases}
     ,
 \end{equation}
 \begin{equation}
     \mathbf{r}_{i,0}^n=
     \begin{cases}
     \mathbf{r}_{i,0}&n=3\\
      {\mathbf{r}}_{i,0}-\left({\mathbf{r}}_{i,0}\cdot\mathbf{n}_{1-4}\right)\mathbf{n}_{1-4}&n=2
     \end{cases}
     ,
 \end{equation}
\end{subequations}
and $\mathbf{n}_{1-4}=\mathbf{n}_{1-4}(\mathbf{r}_{j_1,0},\mathbf{r}_{j_2,0},\mathbf{r}_{j_3,0})$ is determined using Eq. \eqref{n144} when $n=2$.
Given followers' communication weights, the weight matrix $\mathbf{W}=[W_{jh}]\in \mathbb{R}^{(N-n-1)\times N}$ is defined as follows:
\begin{equation}
    W_{jh}=
    \begin{cases}
    w_{i_j,i_h}&i_h\in \mathcal{N}_j\wedge i_j\in \mathcal{V}_F\\
    0&\mathrm{otherwise}
    \end{cases}
\end{equation}
Matrix ${\mathbf{W}}$ can be partitioned as follows:
\begin{equation}
\label{Wpartition}
    {\mathbf{W}}=
    \left[
    \begin{array}{c|c}
         \mathbf{B}&\mathbf{A} 
    \end{array}
    \right],
\end{equation}
where $\mathbf{B}\in \mathbb{R}^{\left(N-n-1\right)\times\left(n+1\right)}$ and $\mathbf{A}\in \mathbb{R}^{\left(N-n-1\right)\times\left(N-n-1\right)}$ are non-negative matrices.

\underline{\textit{HDM Desired Trajectory:}} Local desired trajectory of agent $i\in \mathcal{V}$ is defined as follows:
\begin{equation}
\label{LOCALDESIRED}
    \mathbf{r}_{i,d}(t)=
    \begin{cases}
    \mathbf{r}_{i,c}&i\in \mathcal{V}_L\\
    \sum_{h\in \mathcal{N}_i}w_{i,h}\mathbf{r}_h&i\in \mathcal{V}_F
    \end{cases}
    .
\end{equation}
Note that global and local desired positions of leader agent $j\in \mathcal{V}_L$ are the same at any time $t$. The component $\mu\in\{x,y,z\}$ of the local desired positions of followers satisfy the following relation:
\begin{equation}
    \begin{split}
    \mu\in\{x,y,z\},~\forall t,\qquad \mathbf{P}_{\mu,d}^F(t)=&\mathbf{A}\mathbf{P}_{\mu}^F(t)+\mathbf{B}\mathbf{P}_{\mu}^L(t)\\
\end{split}
,
\end{equation}
where $\mathbf{A}$ and $\mathbf{B}$ were previously introduced in Eq. \eqref{Wpartition}. $\mathbf{P}_\mu^L=\left[\mu_{i_1}~\cdots~\mu_{i_{n+1}}\right]^T$, and $\mathbf{P}_\mu^F=\left[\mu_{i_{n+2}}~\cdots~\mu_{i_{N}}\right]^T$ assign the component $\mu\in \{x,y,z\}$ of actual positions of leaders and followers, respectively. Furthermore, $\mathbf{P}_{\mu,d}^F=\left[\mu_{i_{n+2},d}~\cdots~\mu_{i_{N},d}\right]^T$ assigns component $\mu\in \{x,y,z\}$ of the local desired positions for all followers. 


{\textbf{Key Property of Homogeneous Deformation:}} If followers' communication weights are consistent with agents' reference positions and obtained by Eq. \eqref{ComWeights}, then, the following relation is true:

 \begin{equation}
     \mathbf{W}_L=-\mathbf{D}^{-1}\mathbf{B}=
     \begin{bmatrix}
     \alpha_{i_{n+2},i_1}&\cdots&\alpha_{i_{n+2},i_{n+1}}\\
     \vdots&\ddots&\vdots\\
     \alpha_{i_N,i_1}&\cdots&\alpha_{i_N,i_{n+1}}\\
     \end{bmatrix}
 \end{equation}
  where 
 \[
\mathbf{D}=-\mathbf{I}+\mathbf{A}
 \]
 is Hurwitz (See the proof in Ref. \cite{rastgoftar2016continuum}). Let $\mathbf{P}_{\mu,c}^L=\left[\mu_{{i_1},c}~\cdots~\mu_{i_{n+1},c}\right]^T$ and $\mathbf{P}_{\mu,c}^F=\left[\mu_{{i_{n+2}},c}~\cdots~\mu_{i_{N},c}\right]^T$ specify component $\mu\in\{x,y,z\}$ of the global desired positions of leaders and followers, respectively.  Given the global desired position of followers defined by Eq. \eqref{GloBDesPos},   $\mathbf{P}_{\mu,c}^F$ is defined based on $\mathbf{P}_{\mu,c}^L$ by
 \begin{equation}
 \mu\in \{x,y,z\},~\forall t,\qquad     \mathbf{P}_{\mu,c}^F(t)=\mathbf{W}_L\mathbf{P}_{\mu,c}^L(t).
 \end{equation}
 \begin{lemma}\label{lemmma2}
 Every entry of matrix $\mathbf{D}^{-1}$ is non-positive.
 \end{lemma}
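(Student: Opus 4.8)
The plan is to exploit the structure $\mathbf{D}=-\mathbf{I}+\mathbf{A}$, where $\mathbf{A}=[A_{jh}]$ is a non-negative matrix arising from the follower communication weights. By construction each row of the full weight matrix $\mathbf{W}=[\mathbf{B}\ \mathbf{A}]$ sums to $1$ (the entries of $\mathbf{\Lambda}$ sum to $1$ by Property~1 in Section~\ref{Motion Space Tetrahedrlization and Operator}), and both $\mathbf{A}$ and $\mathbf{B}$ are non-negative. Hence for each follower row $j$ one has $\sum_h A_{jh} = 1 - \sum_k B_{jk} \le 1$, so $\mathbf{A}$ is a non-negative matrix with row sums at most $1$, i.e. $\|\mathbf{A}\|_\infty \le 1$. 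In fact, since the leaders are reachable from every follower through the coordination graph $\mathcal{G}_c$ (this is exactly what makes $\mathbf{D}$ Hurwitz, as cited from Ref.~\cite{rastgoftar2016continuum}), the row sums of $\mathbf{A}$ are strictly less than $1$ for at least the followers directly connected to a leader, and the irreducibility-type argument that yields Hurwitzness also yields $\rho(\mathbf{A})<1$ where $\rho$ denotes the spectral radius.

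The key step is then the Neumann series identity
\begin{equation}
\label{neumann}
    \mathbf{D}^{-1} = (-\mathbf{I}+\mathbf{A})^{-1} = -(\mathbf{I}-\mathbf{A})^{-1} = -\sum_{k=0}^{\infty}\mathbf{A}^k,
\end{equation}
which converges precisely because $\rho(\mathbf{A})<1$. Since $\mathbf{A}\ge \mathbf{0}$ entrywise, every power $\mathbf{A}^k$ is entrywise non-negative, so the infinite sum $\sum_{k=0}^\infty \mathbf{A}^k$ is entrywise non-negative; multiplying by $-1$ shows every entry of $\mathbf{D}^{-1}$ is non-positive, which is the claim. (Equivalently, one can invoke the standard fact that if $\mathbf{A}\ge\mathbf{0}$ and $\rho(\mathbf{A})<1$ then $(\mathbf{I}-\mathbf{A})^{-1}=\sum_k\mathbf{A}^k\ge\mathbf{0}$, i.e. $\mathbf{I}-\mathbf{A}$ is an $M$-matrix; then $\mathbf{D}^{-1}=-(\mathbf{I}-\mathbf{A})^{-1}\le\mathbf{0}$.)

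The main obstacle is justifying the convergence of the Neumann series, i.e. establishing $\rho(\mathbf{A})<1$ rather than merely $\rho(\mathbf{A})\le 1$. This does not follow from $\|\mathbf{A}\|_\infty\le 1$ alone; it requires that from every follower index there is a directed path in $\mathcal{G}_c$ to some leader, so that the corresponding row of $\mathbf{A}^k$ eventually "leaks" mass into the leader block $\mathbf{B}$ for $k$ large enough. Fortunately this is exactly the hypothesis already invoked when asserting that $\mathbf{D}=-\mathbf{I}+\mathbf{A}$ is Hurwitz (proof in Ref.~\cite{rastgoftar2016continuum}): $\mathbf{D}$ Hurwitz means all eigenvalues of $\mathbf{A}$ lie strictly inside the unit disk, i.e. $\rho(\mathbf{A})<1$. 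So I would simply cite that the Hurwitz property of $\mathbf{D}$, already established, is equivalent to $\rho(\mathbf{A})<1$, making \eqref{neumann} valid, and conclude $\mathbf{D}^{-1}=-\sum_{k\ge0}\mathbf{A}^k\le\mathbf{0}$ entrywise.
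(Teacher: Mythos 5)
Your proof is correct, but it takes a genuinely different route from the paper. The paper argues via Gauss--Jordan elimination on the augmented matrix $\left[\mathbf{D}\,\big|\,\mathbf{I}\right]$: since the diagonal of $\mathbf{D}$ is $-1$ and the off-diagonal entries are non-negative, the row operations that zero the lower and upper triangles all involve multiplying a row by a negative scalar, from which the sign of $\mathbf{D}^{-1}$ is read off. That argument is informal --- it does not track how pivots and fill-in entries evolve during elimination, and as written it even states the conclusion with the wrong sign (``non-negative'' rather than non-positive). Your Neumann-series / $M$-matrix argument, $\mathbf{D}^{-1}=-(\mathbf{I}-\mathbf{A})^{-1}=-\sum_{k\ge 0}\mathbf{A}^k\le\mathbf{0}$, is the standard and fully rigorous route, and you correctly isolate the one real hypothesis needed, namely $\rho(\mathbf{A})<1$, tying it to the Hurwitz property of $\mathbf{D}$ already cited from Ref.~\cite{rastgoftar2016continuum}. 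One small imprecision: $\mathbf{D}=-\mathbf{I}+\mathbf{A}$ Hurwitz literally says every eigenvalue of $\mathbf{A}$ has real part less than $1$, not modulus less than $1$; to conclude $\rho(\mathbf{A})<1$ you should additionally invoke Perron--Frobenius for the non-negative matrix $\mathbf{A}$, which guarantees that $\rho(\mathbf{A})$ is itself a (real, non-negative) eigenvalue of $\mathbf{A}$, so the Hurwitz condition applied to that eigenvalue gives the strict bound. With that one-line addition your argument is airtight and, in my view, preferable to the paper's.
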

  \begin{proof}
  Diagonal entries of matrix $\mathbf{D}$ are all $-1$ while the off-diagonal entries of $\mathbf{D}$ are either $0$ or positive. Using the Gauss-Jordan elimination method,  the augmented matrix $\mathbf{D}_a=\left[\mathbf{D}\big|\mathbf{I}\right]\in \mathbb{R}^{\left(N-n-1\right)\times 2\left(N-n-1\right)}$ can be converted to matrix $\tilde{\mathbf{D}}_a=\left[\mathbf{I}\big|\mathbf{D}^{-1}\right]\in \mathbb{R}^{\left(N-n-1\right)\times 2\left(N-n-1\right)}$ only by performing row algebraic operations. Entries of the lower triangle of matrix $\mathbf{D}$ can be all converted to $0$, if  a top row is multiplied by a \underline{negative} scalar and the outcome is added to the other rows. Elements of the upper triangular submatrix of $\mathbf{L}$ can be similarly zeroed, if the bottom row is multiplied by a \underline{negative} scalar and the outcome is added to  the other rows.  Therefore, $\mathbf{D}^{-1}$, obtained by performing these row operations on $\mathbf{L}_a$,  is non-negative.  
  \end{proof}

 \begin{lemma}
 Define the
 \textbf{local-desired error} vector $\mathbf{E}_{\mu,d}^F=\left[\mu_{i_{n+2},d}-\mu_{i_{n+2}}~\cdots~\mu_{i_N,d}-\mu_{i_N}\right]^T$, and the \textbf{global-desired error} vectors $\mathbf{E}_{\mu,c}^L=\left[\mu_{i_1,c}-\mu_{i_{1}}~\cdots~\mu_{i_{n+1},c}-\mu_{i_{n+1},d}\right]^T$ and $\mathbf{E}_{\mu,c}^F=\left[\mu_{i_{n+2},c}-\mu_{i_{n+2}}~\cdots~\mu_{i_N,c}-\mu_{i_N}\right]^T$ where $\mu\in \{x,y,z\}$. The following relations are true:
 \begin{subequations}
 \begin{equation}
 \label{LocalErrorRelation}
   \mu\in\{x,y,z\},~\forall t,\qquad  \mathbf{E}_{\mu,d}^F(t)=\mathbf{D}\mathbf{P}_q^F(t)+\mathbf{B}\mathbf{P}_q^L(t),
 \end{equation}
 \begin{equation}
  \label{MainLocalErrorRelation}
 \mu\in\{x,y,z\},~\forall t,\qquad    \mathbf{E}_{q,c}^F=-\mathbf{D}^{-1}\mathbf{E}_{q,d}^F+\mathbf{B}\mathbf{E}_{q,c}^L,
 \end{equation}
 \end{subequations}
 \end{lemma}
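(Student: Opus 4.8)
The plan is to obtain both identities by direct algebraic manipulation of the defining relations collected earlier in this subsection, the only structural fact needed being that $\mathbf{D}=-\mathbf{I}+\mathbf{A}$ is Hurwitz, hence nonsingular. For \eqref{LocalErrorRelation} I would start from the componentwise local-desired relation for the followers, $\mathbf{P}_{\mu,d}^F(t)=\mathbf{A}\mathbf{P}_{\mu}^F(t)+\mathbf{B}\mathbf{P}_{\mu}^L(t)$, and from the definition of the local-desired error, whose $k$-th entry is $\mu_{i_{n+1+k},d}-\mu_{i_{n+1+k}}$, i.e. $\mathbf{E}_{\mu,d}^F=\mathbf{P}_{\mu,d}^F-\mathbf{P}_{\mu}^F$. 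Subtracting $\mathbf{P}_{\mu}^F$ from the local-desired relation and writing $\mathbf{A}=\mathbf{I}+\mathbf{D}$ gives $\mathbf{E}_{\mu,d}^F=(\mathbf{A}-\mathbf{I})\mathbf{P}_{\mu}^F+\mathbf{B}\mathbf{P}_{\mu}^L=\mathbf{D}\mathbf{P}_{\mu}^F+\mathbf{B}\mathbf{P}_{\mu}^L$, which is \eqref{LocalErrorRelation} (reading the subscript $q$ there as $\mu$). There is no subtlety in this first step.

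For \eqref{MainLocalErrorRelation} I would invert the identity just proved: since $\mathbf{D}$ is Hurwitz it is invertible, so $\mathbf{P}_{\mu}^F=\mathbf{D}^{-1}\mathbf{E}_{\mu,d}^F-\mathbf{D}^{-1}\mathbf{B}\mathbf{P}_{\mu}^L$. Substituting this, together with the global-desired relation $\mathbf{P}_{\mu,c}^F=\mathbf{W}_L\mathbf{P}_{\mu,c}^L$ and the Key-Property identity $\mathbf{W}_L=-\mathbf{D}^{-1}\mathbf{B}$, into $\mathbf{E}_{\mu,c}^F=\mathbf{P}_{\mu,c}^F-\mathbf{P}_{\mu}^F$ yields $\mathbf{E}_{\mu,c}^F=-\mathbf{D}^{-1}\mathbf{B}\mathbf{P}_{\mu,c}^L-\mathbf{D}^{-1}\mathbf{E}_{\mu,d}^F+\mathbf{D}^{-1}\mathbf{B}\mathbf{P}_{\mu}^L$. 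The two $\mathbf{D}^{-1}\mathbf{B}$ terms collect into $-\mathbf{D}^{-1}\mathbf{B}\bigl(\mathbf{P}_{\mu,c}^L-\mathbf{P}_{\mu}^L\bigr)=\mathbf{W}_L\mathbf{E}_{\mu,c}^L$, while the remaining term is $-\mathbf{D}^{-1}\mathbf{E}_{\mu,d}^F$; hence $\mathbf{E}_{\mu,c}^F=-\mathbf{D}^{-1}\mathbf{E}_{\mu,d}^F+\mathbf{W}_L\mathbf{E}_{\mu,c}^L$, which is \eqref{MainLocalErrorRelation} with the leader coefficient written out as $\mathbf{W}_L=-\mathbf{D}^{-1}\mathbf{B}$.

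The derivation is essentially bookkeeping, so I do not anticipate a genuine obstacle. The one place deserving care is the cancellation step: one must keep straight the two occurrences of $\mathbf{D}^{-1}\mathbf{B}$ applied to leader data --- one to the global desired leader positions $\mathbf{P}_{\mu,c}^L$ and one to the actual leader positions $\mathbf{P}_{\mu}^L$ --- so that they combine into $\mathbf{W}_L\mathbf{E}_{\mu,c}^L$ through $\mathbf{E}_{\mu,c}^L=\mathbf{P}_{\mu,c}^L-\mathbf{P}_{\mu}^L$ rather than into some local-desired leader quantity, and one should record that $\mathbf{P}_{\mu,c}^L$ and $\mathbf{E}_{\mu,c}^L$ are not affected by communication since leaders move independently. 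A secondary, purely cosmetic point is reconciling the paper's mixed $q$/$\mu$ subscripts and reading the coefficient of $\mathbf{E}_{\mu,c}^L$ in \eqref{MainLocalErrorRelation} as $\mathbf{W}_L=-\mathbf{D}^{-1}\mathbf{B}$.
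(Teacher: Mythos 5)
Your proof is correct and follows essentially the same route as the paper: both derive \eqref{LocalErrorRelation} by subtracting $\mathbf{P}_{\mu}^F$ from the local-desired relation and using $\mathbf{D}=\mathbf{A}-\mathbf{I}$, and both obtain \eqref{MainLocalErrorRelation} by combining it with $\mathbf{P}_{\mu,c}^F=\mathbf{W}_L\mathbf{P}_{\mu,c}^L$ and the identity $\mathbf{W}_L=-\mathbf{D}^{-1}\mathbf{B}$. You are also right that the coefficient of $\mathbf{E}_{\mu,c}^L$ in the stated equation should be read as $\mathbf{W}_L=-\mathbf{D}^{-1}\mathbf{B}$ rather than $\mathbf{B}$; your cleanly collected cancellation yields the correct form, which the paper's own final line misstates.
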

\begin{proof}
Let
 \[
 \mu\in \{x,y,z\},i_j\in \mathcal{V}_F,\qquad \mu_{i_j,d}=\sum_{k\in \mathcal{N}_{i_j}}w_{i_j,k}{\mu}_k,
 \]
then, row $j$ of relation \eqref{LocalErrorRelation} can be expressed as follows:
 \[
   \mu_{i_{j+n+1},d}-\mu_{i_{j+n+1}}=-\mu_{i_{j+n+1}}+\sum_{k\in \mathcal{N}_{i_j}}w_{i_j,k}{\mu}_k,
 \]
where $\mu\in \{x,y,z\}$, $i_j\in \mathcal{V}_F$. Considering the key property of homogneous transformation, $\mathbf{B}=-\mathbf{D}\mathbf{W}_L$ and $\mathbf{P}_{\mu,c}^L$,  Eq. \eqref{LocalErrorRelation} can be rewritten as
 \[
 \begin{split}
\mathbf{E}_{\mu,d}^F=&\mathbf{D}\left(\mathbf{P}_\mu^F-\mathbf{W}_L\underbrace{\left(\mathbf{P}_{\mu,c}^L-\mathbf{E}_{\mu,c}^L\right)}_{\mathbf{P}_\mu^L}\right)=\mathbf{D}\left(-\mathbf{E}_{\mu,c}^F-\mathbf{B}\mathbf{E}_{\mu,c}^L\right).
\end{split}
 \]
where $\mu\in\{x,y\}$.  Therefore, $\mathbf{E}_{\mu,c}^F=-\mathbf{D}^{-1}\mathbf{E}_{\mu,d}^F+\mathbf{B}\mathbf{E}_{\mu,c}^L$.
 \\
\end{proof}

\begin{theorem}\label{theorem1}
 Assume control inputs $\mathbf{U}_L$ and $\mathbf{U}_F$ are designed so that
\begin{equation}
    \forall j\in \mathcal{V},~\forall \mu\in\left\{x,y,z\right\},\qquad \left|{\mu}_j-{\mu}_{j,d}\right|\leq \Delta_\mu,
\end{equation}
where $\mu_j$  and $\mu_{j,d}=\sum_{h\in \mathcal{N}_j}w_{j,h}\mu_h$ are components $\mu\in \{x,y,z\}$ of the \underline{actual} and  \underline{local desired} positions of vehicle $i$; communication weight $w_{j,h}$ is obtained using Eq. \eqref{ComWeights}. Then,
\begin{equation}
\label{importantineq}
\begin{split}
    &\sqrt{\left(x_j-x_{j,c}\right)^2+\left(y_j-y_{j,c}\right)^2+\left(y_j-y_{j,c}\right)^2}\leq\Delta,
\end{split}
\end{equation}
where 
\begin{subequations}
\begin{equation}
\label{DELTAAAAAAAAAAA}
\begin{split}
    \Delta=\Xi_{\mathrm{max}}\sqrt{\Delta_x^2+\Delta_y^2+\Delta_z^2},\\
\end{split}
\end{equation}
 \begin{equation}
\begin{split}
    \Xi_{\mathrm{max}}=
    &\max\limits_{l}\left(-\sum_{j=1}^{N-n-1}\mathbf{D}_{lj}^{-1}+\sum_{j=1}^{n+1}\mathbf{B}_{lj}\right)\\
\end{split}
\end{equation}
\end{subequations}

 \end{theorem}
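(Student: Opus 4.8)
The plan is to push the componentwise local-tracking bound through the linear error identity \eqref{MainLocalErrorRelation} and then recombine the three Cartesian components into a single Euclidean estimate. First I would observe that, for every $\mu\in\{x,y,z\}$ and all $t$, relation \eqref{MainLocalErrorRelation} reads $\mathbf{E}_{\mu,c}^F=-\mathbf{D}^{-1}\mathbf{E}_{\mu,d}^F+\mathbf{B}\mathbf{E}_{\mu,c}^L$, and that the hypothesis $|\mu_j-\mu_{j,d}|\le\Delta_\mu$ bounds every entry of $\mathbf{E}_{\mu,d}^F$ in magnitude by $\Delta_\mu$; moreover, since $\mathbf{r}_{j,d}=\mathbf{r}_{j,c}$ for every leader by \eqref{LOCALDESIRED}, the same hypothesis also bounds every entry of $\mathbf{E}_{\mu,c}^L$ by $\Delta_\mu$.

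Next, for a follower whose index corresponds to row $l$ of the block partition $\mathbf{W}=[\,\mathbf{B}\mid\mathbf{A}\,]$ in \eqref{Wpartition}, I would write out the $l$-th scalar equation of \eqref{MainLocalErrorRelation}, take absolute values, and apply the triangle inequality. The decisive point is that every entry of $-\mathbf{D}^{-1}$ is non-negative by Lemma \ref{lemmma2} and every entry of $\mathbf{B}$ is non-negative by construction, so the common bound $\Delta_\mu$ factors out of both sums and yields $|\mu_{j,c}-\mu_j|\le\Delta_\mu\bigl(-\sum_{k}\mathbf{D}^{-1}_{lk}+\sum_{k}\mathbf{B}_{lk}\bigr)\le\Xi_{\mathrm{max}}\Delta_\mu$. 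For a leader $j$ the estimate is immediate: $|\mu_{j,c}-\mu_j|=|\mu_{j,d}-\mu_j|\le\Delta_\mu$, and this is at most $\Xi_{\mathrm{max}}\Delta_\mu$ once we observe that $\Xi_{\mathrm{max}}\ge1$; the latter follows from the identity $-\mathbf{D}^{-1}=\mathbf{I}+\mathbf{A}(-\mathbf{D}^{-1})$, which together with $\mathbf{A}\ge\mathbf{0}$ and $-\mathbf{D}^{-1}\ge\mathbf{0}$ (Lemma \ref{lemmma2}) forces each diagonal entry $(-\mathbf{D}^{-1})_{ll}\ge1$, hence each row sum of $-\mathbf{D}^{-1}$ to be at least $1$.

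Having obtained $|\mu_j-\mu_{j,c}|\le\Xi_{\mathrm{max}}\Delta_\mu$ for every $j\in\mathcal{V}$ and every $\mu\in\{x,y,z\}$, I would conclude by squaring and summing over the three components, which gives $\sqrt{(x_j-x_{j,c})^2+(y_j-y_{j,c})^2+(z_j-z_{j,c})^2}\le\Xi_{\mathrm{max}}\sqrt{\Delta_x^2+\Delta_y^2+\Delta_z^2}=\Delta$, i.e.\ \eqref{importantineq}.

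I do not expect a genuine obstacle: the core of the argument is the triangle inequality followed by a norm recombination. The two facts that do the real work are structural and essentially already available — the sign of $\mathbf{D}^{-1}$ from Lemma \ref{lemmma2}, and the fact that leaders track their global-desired trajectory directly, so the leader global-desired error and the follower local-desired error are governed by the same $\Delta_\mu$. The only place requiring care is tracking the follower's index $l$ consistently through the partition $\mathbf{W}=[\,\mathbf{B}\mid\mathbf{A}\,]$ so that the maximizing row in $\Xi_{\mathrm{max}}$ matches the follower the bound refers to.
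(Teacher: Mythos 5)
Your proposal is correct and follows essentially the same route as the paper: it pushes the componentwise bound through the error identity \eqref{MainLocalErrorRelation} using the triangle inequality together with the non-negativity of $-\mathbf{D}^{-1}$ (Lemma \ref{lemmma2}) and of $\mathbf{B}$, arrives at $\left|\mu_j-\mu_{j,c}\right|\leq \Xi_{\mathrm{max}}\Delta_\mu$, and recombines the three components into the Euclidean bound $\Delta$. Your treatment is in fact slightly more complete than the paper's, since you explicitly dispose of the leader case and justify $\Xi_{\mathrm{max}}\geq 1$, points the paper leaves implicit.
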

 \begin{proof}
 Considering Eq. \eqref{MainLocalErrorRelation}, we can write
 \[
 \begin{split}
     &\left|\mu_{i_j,c}-\mu_{i_j}\right|=\bigg|-\sum_{j=1}^{N-n-1}D_{lj}^{-1}\left(\mu_{i_{j+n+1},c}-\mu_{i_{j+n+1}}\right)+\\
     &\sum_{j=1}^{n+1}B_{lj}\left(\mu_{i_j,c}-\mu_{i_j}\right)\bigg|\leq-\sum_{j=1}^{N-n-1}D_{lj}^{-1}\left|\mu_{i_{j+n+1},c}-\mu_{i_{j+n+1}}\right|+\\
     &\sum_{j=1}^{+1}B_{lj}\left|q_{i_j,c}-q_{i_j}\right|\leq-\sum_{j=1}^{N-n-1}D_{lj}^{-1}\Delta_\mu+\sum_{j=1}^{+1}B_{lj}\Delta_\mu\\
     \leq &\Delta_\mu\max\limits_{l}\left(-\sum_{j=1}^{N-n-1}\mathbf{D}_{lj}^{-1}+\sum_{j=1}^{n+1}\mathbf{B}_{lj}\right)=\Xi_{\mathrm{max}}\Delta_\mu\\
 \end{split}
 \]
 for $\mu\in\{x,y,z\}$. Therefore, inequality \eqref{importantineq} is satisfied.
 \end{proof}
 Theorem \ref{theorem1} specifies an upper limit for deviation of actual position of vehicle $i$ from the desired coordination defined at HDM. It is assumed that every vehicle is enclosed by a vertical cylinder of radius $\epsilon$, and $d_{min}$ denotes the minimum separation distance between every vehicle pair in the reference configuration. Then, inter-agent collision avoidance is guaranteed in a homogeneous deformation coordination, if the following inequality constraint is satisfied at any time $t$ \cite{rastgoftar2016continuum}:
 \begin{equation}
     \forall t, \qquad \min\{\sigma_1(t),\sigma_2(t),\sigma_3(t)\}\geq \dfrac{\Delta+\epsilon}{{\frac{d_{min}}{}2}+\epsilon}.
 \end{equation}






\subsubsection{HDM Control System}
It is assumed that vehicle $j\in \mathcal{V}$ has a nonlinear dynamics given by
\begin{equation}
\label{vehicledynamics}
\begin{cases}
\dot{\mathbf{x}}_j=\mathbf{f}_j(\mathbf{x}_j,\mathbf{u}_j)\\
\mathbf{r}_j=[x_j~y_j~z_j]^T
\end{cases}
,
\end{equation}
where $\mathbf{x}_j\in \mathbb{R}^{n_{\textbf{x},j}\times 1}$ and $\mathbf{u}_j\in \mathbb{R}^{n_{\textbf{u},j}\times 1}$ are the state and input vectors, and $\mathbf{r}_j=[x_j~y_j~z_j]^T$ is the actual position of vehicle $j$ considered as the output of vehicle $j$. 
As aforementioned, leaders move independently at the HDM. Therefore, $\mathcal{N}_i=\emptyset$, if $i\in \mathcal{V}_L$. 
Dynamics of the vehicle team is given by:
\[
\mathrm{Leaders:}
\qquad 
\begin{cases}
\dot{\mathbf{X}}_L
=\mathbf{F}_L\left({\mathbf{X}}_L,\mathbf{U}_L\right)
\\
\mathbf{R}_L=\mathrm{vec}\left(
\begin{bmatrix}
\mathbf{r}_{i_1}&
\cdots&
\mathbf{r}_{i_{n+1}}
\end{bmatrix}^T\right)
\end{cases}
\]
\[
\mathrm{Followers:}
\qquad 
\begin{cases}
\dot{\mathbf{X}}_F
=\mathbf{F}_F\left({\mathbf{X}}_F,\mathbf{U}_F\right)
\\
\mathbf{R}_F=\mathrm{vec}\left(
\begin{bmatrix}
\mathbf{r}_{i_{n+2}}&
\cdots&
\mathbf{r}_{i_{N}}
\end{bmatrix}^T\right)
\end{cases}
\]
where $\mathrm{vec}(\left[\cdot\right])$ vectorizes matrix $\left[\cdot\right]$, $\mathbf{X}_F$ and $\mathbf{X}_L$ are the state vectors representing leaders and followers, respectively, $\mathbf{U}_F$ and $\mathbf{U}_L$ are the leaders' and followers' control inputs. $\mathbf{F}_L=[\mathbf{f}_{i_1}^T~\cdots \mathbf{f}_{i_{n+1}}^T]^T$ and $\mathbf{F}_F=[\mathbf{f}_{i_{n+2}}^T~\cdots \mathbf{f}_{i_{N}}^T]^T$ are smooth functions where $\mathbf{f}_{i_k}$ ($k=1,\cdots,N_F$) specifies the dynamics of vehicle $i_k$ previously given in \eqref{vehicledynamics}. Also, $\mathbf{R}_L=\mathrm{vec}\left([\mathbf{r}_{i_1}~\cdots \mathbf{r}_{i_{N_L}}]^T\right)$ and $\mathbf{R}_F=\mathrm{vec}\left([\mathbf{r}_{i_{N_L+1}}~\cdots \mathbf{r}_{i_{N_F}}]^T\right)$ where $\mathbf{r}_{i_k}=[x_{i_k}~y_{i_k}~z_{i_k}]$ denotes actual position of vehicle $i_k$ ($k=1,\cdots,N_F$), $\mathbf{X}_L=\left[\mathbf{x}_{i_1}^T~\cdots~\mathbf{x}_{i_{n+1}}^T\right]^T$, $\mathbf{U}_L=\left[\mathbf{u}_{i_1}^T~\cdots~\mathbf{u}_{i_{n+1}}^T\right]^T$, $\mathbf{X}_F=\left[\mathbf{x}_{i_{n+2}}^T~\cdots~\mathbf{x}_{i_{N}}^T\right]^T$, and $\mathbf{U}_L=\left[\mathbf{u}_{i_{n+2}}^T~\cdots~\mathbf{u}_{i_{N}}^T\right]^T$ . 
\begin{figure}[ht]
\centering
\includegraphics[width=3.3 in]{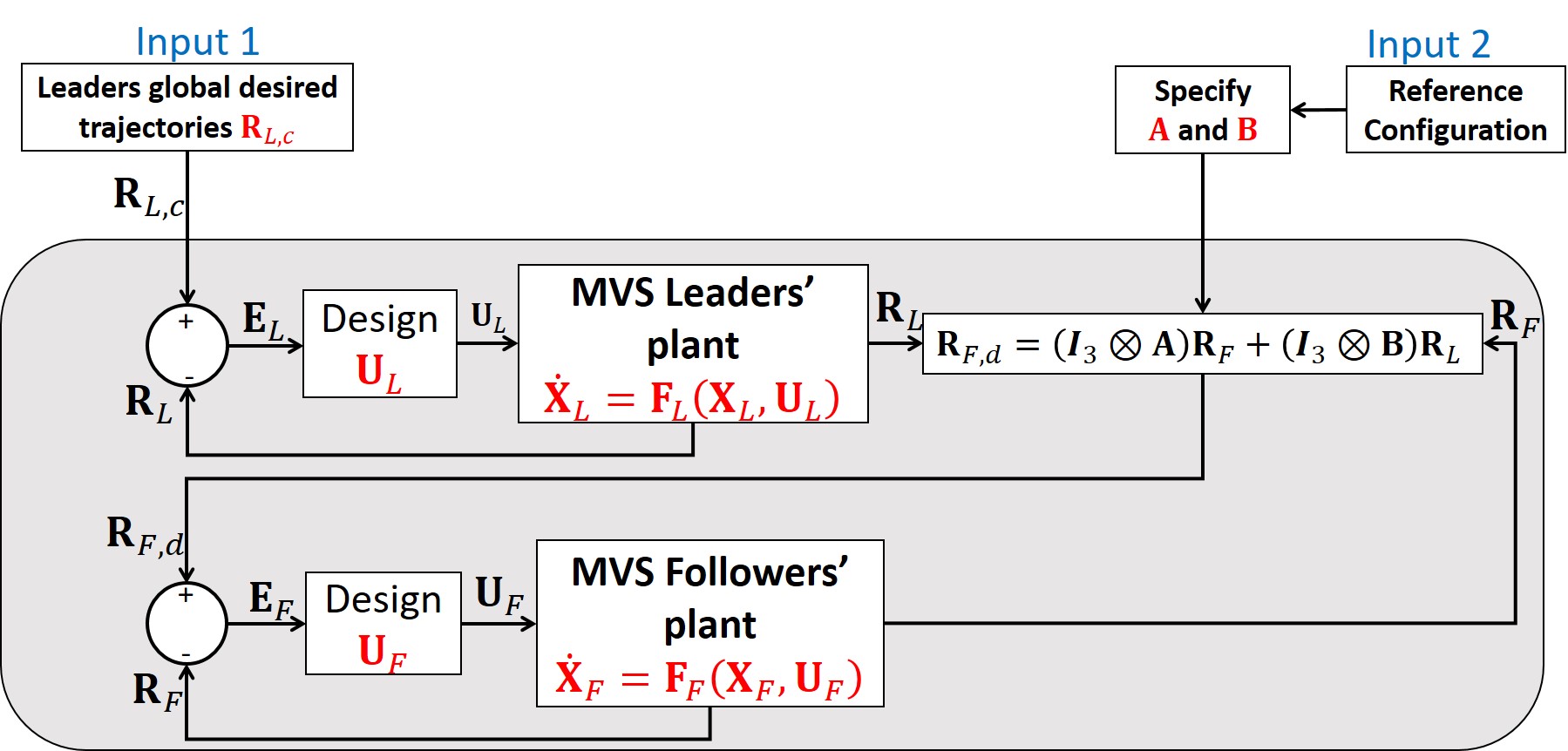}
\caption{Functionality of the cooperative team when HDM is active.  }
\label{HDMFunctionality}
\end{figure}
Fig. \ref{HDMFunctionality} shows the functionality of the cooperative control system in HDM. As shown the  system has the following inputs:
\begin{enumerate}
\item{Global desired trajectories of all leaders specified by vector $\mathbf{R}_L(t)$ at any time $t$.}
    \item{Matrix $\mathbf{A}$ and $\mathbf{B}$ assigned based on the cooperative team reference configuration using relation \eqref{Wpartition}.}
\end{enumerate}
Leader global desired trajectories can be safely planned so that collision with obstacles and inter-agent collision are both avoided while the leaders' distances between initial  and target states are minimized. Leader path planning using A* search and particle swarm optimization were previously studied in Refs. \cite{rastgoftar2019multi, liang2019multi}. Control inputs $\mathbf{U}_L$ and $\mathbf{U}_F$ can be assigned using existing approaches so the actual trajectory $\mathbf{r}_j$ is asymptotically tracked $\mathbf{r}_{j,c}$ for every vehicle $j\in \mathcal{V}$; specific analysis of tracking is beyond the scope of this paper. 
\subsection{Containment Exclusion Mode (CEM)}
\label{Containment Exclusion Mode (CEM)}
CEM is activated when there exists at least one  vehicle experiencing a failure or anomaly in containment domain $\Omega_{\mathrm{con}}$. Failed agent(s) are wrapped with an exclusion zone and healthy agents must be routed or "flow" around. Thus, $N_F(t)<N(t)$ and $\left|\mathcal{V}_A(t)\right|>0$ at any time $t$ when CEM is active. For  CEM, the coordinate transformation defined in \eqref{CoordinateTransformation} is used to assign the desired agent coordination. In particular,  potential function $\phi$ and stream function $\psi$ are determined by combining ``Uniform'' and ``Doublet`` flows:
\[
\begin{split}
    \phi(x,y,t)=&\phi_U(x,y,t)+\phi_D(x,y,t)\\
    \psi(x,y,t)=&\psi_U(x,y,t)+\psi_D(x,y,t)\\
\end{split}
\]
where the subscripts $U$ and $D$ are associated with ``Uniform'' and ``Doublet'', respectively. For the uniform flow pattern,
\begin{subequations}
\label{PHIU}
\begin{equation}
    \phi_U(x,y,t)=u_\infty(t)\left(x\cos\theta_\infty(t)+y\sin\theta_\infty(t)\right)
\end{equation}
\begin{equation}
    \psi_U(x,y,t)=u_\infty(t)\left(-x\sin\theta_\infty(t)+y\cos\theta_\infty(t)\right),
\end{equation}
\end{subequations}
define the potential and stream fields, respectively, where $u_\infty(t)$ and $\theta_\infty(t)$ are design parameters. Furthermore,
\[
\phi_D=\sum_{i\in\mathcal{V}_A}\phi_{D,i}\qquad\mathrm{and}\qquad \psi_D=\sum_{i\in\mathcal{V}_A}\psi_{D,i},
\]
define potential and stream fields of the Doublet flow, respectively, where
\begin{subequations}
\label{phipsi}
\begin{equation}
\phi_{D,i}=\dfrac{\delta_{i}(t)\left[\cos\gamma_{i}(t)\left(x-a_{i}(t)\right)+\sin\gamma_{i}(t)\left(y-b_{i}(t)\right)\right]}{\left(x-a_{i}(t)\right)^2+\left(y-b_{i}(t)\right)^2},
\end{equation}
\begin{equation}
\psi_{D,i}=\dfrac{\delta_{i}(t)\left[-\sin\gamma_{i}(t)\left(x-a_{i}(t)\right)+\cos\gamma_{i}(t)\left(y-b_{i}(t)\right)\right]}{\left(x-a_{i}(t)\right)^2+\left(y-b_{i}(t)\right)^2},
\end{equation}
\end{subequations}
and ${\Delta}_{i}$, ${a}_{i}$, ${b}_{i}$ are design parameters specifying the geometry and location of anomalous/failed agent $i\in \mathcal{V}_A$ in the motion space.
By treating agent coordination as ideal fluid flow, we can exclude failed agent $i\in \mathcal{V}_A$ by wrapping them with a closed surface $\psi(x,y,t)=\psi_{i,0}$, where $\psi_{i,0}$ is constant. Furthermore, healthy vehicle  $j\in \mathcal{V}_H$ moves along the global desired trajectories
\begin{equation}
\label{PlatooningCondition}
 \psi_{j,c}(t)=\psi(x_{j,c}(t),y_{j,c}(t),t)=\psi_{j,0}~(\mathrm{constant}).
\end{equation}
{\color{black}where $\psi_{j,0}$ is assigned based on position of vehicle $j\in \mathcal{V}_{H}$ at the time the cooperative team enters the CEM.} 


\begin{theorem}\label{theorem222}
Suppose $\mathbf{J}\left(x_j,y_j\right)$ is the Jacobian matrix defined by \eqref{Jaaaacobian}, and the desired trajectory of every agent $j\in \mathcal{V}_H$ satisfies Eq. \eqref{StickingCondition}. Define
\begin{subequations}
\begin{equation}
    \mathbf{H}_{s,j}=-\dfrac{1}{\left|\mathbf{J}\left(x_j,y_j\right)\right|}
    \begin{bmatrix}
    \dfrac{\partial \psi}{\partial y_{j,c}}\\-\dfrac{\partial \psi}{\partial x_{j,c}}
    \end{bmatrix}
    \begin{bmatrix}
    \dfrac{\partial \phi}{\partial u_{\infty}}&\dfrac{\partial \phi}{\partial\theta_{\infty}}
    \end{bmatrix}
\end{equation}
\begin{equation}
\resizebox{0.99\hsize}{!}{%
    $\mathbf{H}_{a,j,i_l}=-\dfrac{1}{\left|\mathbf{J}\left(x_j,y_j\right)\right|}
    \begin{bmatrix}
    \dfrac{\partial \psi}{\partial y_{j,c}}\\-\dfrac{\partial \psi}{\partial x_{j,c}}
    \end{bmatrix}
    \begin{bmatrix}
    \dfrac{\partial \phi}{\partial a_{i_l}}&\dfrac{\partial \phi}{\partial b_{i_l}}&\dfrac{\partial \phi}{\partial \Delta_{i_l}}
    \end{bmatrix}
    ,\qquad l=N_F+1,\cdots,N,$
    }
\end{equation}
\begin{equation}
    \mathbf{H}_{T,j}=\dfrac{1}{\left|\mathbf{J}\left(x_j,y_j\right)\right|}
    \begin{bmatrix}
    \dfrac{\partial \psi}{\partial y_{j,c}}\\-\dfrac{\partial \psi}{\partial x_{j,c}}
    \end{bmatrix}
    ,
\end{equation}
 \begin{equation}
    \dot{\mathbf{q}}_{c,\mathrm{CEM}}=
 \begin{bmatrix}
       \dot{u}_{\infty}~\dot{\theta}_{\infty}
 \end{bmatrix}
 ^T
 ,
\end{equation}
 \begin{equation}
    \dot{\mathbf{q}}_{u,\mathrm{CEM}}=
 \begin{bmatrix}
       \dot{a}_{i_{N_F+1}}~\dot{b}_{i_{N_F+1}}~\dot{\Delta}_{i_{N_F+1}}~\cdots~\dot{a}_{i_{N}}~\dot{b}_{i_{N}}~\dot{\Delta}_{i_{N}}
 \end{bmatrix}
 ^T
 ,
\end{equation}
\end{subequations}
Then, the CEM global desired trajectory can be defined by Eq. \eqref{MAINNNNN}, where $\gamma=\mathrm{CEM}$,
\begin{subequations}
\label{IdealFluidFlowDesired}
\begin{equation}
    \dot{\mathbf{q}}_{\mathrm{CEM}}=
 \begin{bmatrix}
       \dot{\mathbf{q}}_{c,\mathrm{CEM}}\\
       \dot{\mathbf{q}}_{u,\mathrm{CEM}}\\
       v_{\phi}
 \end{bmatrix}
 \in \mathbb{R}^{3+3\left(N-N_F\right)\times 1}
 ,
\end{equation}
\begin{equation}
    \mathbf{H}_{j,\mathrm{CEM}}=
 \begin{bmatrix}
       1&0\\
       0&1\\
       \frac{\partial z_{j,c}}{\partial x_{j,c}} & \frac{\partial z_{j,c}}{ \partial y_{j,c}}
 \end{bmatrix}
 \begin{bmatrix}
       \mathbf{H}_s&\mathbf{H}_{a,i_{N_F+1}}&\cdots&\mathbf{H}_{a,i_{N}}&\mathbf{H}_{T,j}
 \end{bmatrix}
\end{equation}
\end{subequations}
for every agent $j\in \mathcal{V}_H$, where $v_\phi=\frac{\partial \phi}{ \partial t}$ is the desired sliding speed of healthy vehicles along their desired stream lines. 
\end{theorem}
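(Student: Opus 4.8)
The plan is to obtain $\dot{\mathbf r}_{j,c}$ by differentiating the coordinate transformation \eqref{CoordinateTransformation} along the path of agent $j\in\mathcal V_H$ and then solving a $2\times2$ linear system for $(\dot x_{j,c},\dot y_{j,c})$, using the fact---established just before the statement---that the Jacobian $\mathbf J(x_j,y_j)$ of \eqref{Jaaaacobian} is nonsingular on the safe set $\mathcal S$. By construction (see \eqref{PHIU}--\eqref{phipsi}), $\phi$ and $\psi$ depend on $t$ only through the uniform-flow parameters $u_\infty,\theta_\infty$ and the doublet parameters $a_i,b_i,\Delta_i$ for $i\in\mathcal V_A$, so each explicit partial is expanded by the chain rule, e.g.
\[
\frac{\partial\phi}{\partial t}=\frac{\partial\phi}{\partial u_\infty}\dot u_\infty+\frac{\partial\phi}{\partial\theta_\infty}\dot\theta_\infty+\sum_{i\in\mathcal V_A}\left(\frac{\partial\phi}{\partial a_i}\dot a_i+\frac{\partial\phi}{\partial b_i}\dot b_i+\frac{\partial\phi}{\partial\Delta_i}\dot\Delta_i\right).
\]

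First I would differentiate $\phi_{j,c}=\phi(x_{j,c},y_{j,c},t)$ to get $\dot\phi_{j,c}=\frac{\partial\phi}{\partial x_{j,c}}\dot x_{j,c}+\frac{\partial\phi}{\partial y_{j,c}}\dot y_{j,c}+\frac{\partial\phi}{\partial t}$ and write $v_\phi:=\dot\phi_{j,c}$ for the commanded sliding rate of agent $j$ along its streamline. Combined with the sticking condition \eqref{StickingCondition}, which is assumed in the theorem, this gives
\[
\mathbf J(x_j,y_j)\begin{bmatrix}\dot x_{j,c}\\\dot y_{j,c}\end{bmatrix}=\begin{bmatrix}v_\phi-\dfrac{\partial\phi}{\partial t}\\0\end{bmatrix}.
\]
Since $\mathbf J$ is nonsingular on $\mathcal S$ and the second entry of the right-hand side vanishes, only the first column of $\mathbf J^{-1}$ survives and Cramer's rule yields
\[
\begin{bmatrix}\dot x_{j,c}\\\dot y_{j,c}\end{bmatrix}=\frac{1}{\left|\mathbf J(x_j,y_j)\right|}\begin{bmatrix}\partial\psi/\partial y_{j,c}\\-\partial\psi/\partial x_{j,c}\end{bmatrix}\left(v_\phi-\frac{\partial\phi}{\partial t}\right).
\]
Substituting the chain-rule expansion of $\partial\phi/\partial t$ and collecting, in turn, the coefficients of $(\dot u_\infty,\dot\theta_\infty)^T$, of each $(\dot a_{i_l},\dot b_{i_l},\dot\Delta_{i_l})^T$ with $l=N_F+1,\dots,N$, and of $v_\phi$, reproduces exactly the matrices $\mathbf H_{s,j}$, $\mathbf H_{a,j,i_l}$, and $\mathbf H_{T,j}$ of the statement (the minus signs in $\mathbf H_{s,j}$, $\mathbf H_{a,j,i_l}$ coming from the $-\partial\phi/\partial t$ term); hence $(\dot x_{j,c},\dot y_{j,c})^T$ is the first two rows of $\mathbf H_{j,\mathrm{CEM}}\dot{\mathbf q}_{\mathrm{CEM}}$ with $\dot{\mathbf q}_{\mathrm{CEM}}=[\dot{\mathbf q}_{c,\mathrm{CEM}}^T~\dot{\mathbf q}_{u,\mathrm{CEM}}^T~v_\phi]^T$.

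Finally I would differentiate $z_{j,c}=z_j(x_{j,c},y_{j,c},t)$; treating the prescribed surface as instantaneously fixed (so that $\partial z_j/\partial t$ is absorbed, consistent with the form of $\dot{\mathbf q}_{\mathrm{CEM}}$), the chain rule gives $\dot z_{j,c}=\frac{\partial z_{j,c}}{\partial x_{j,c}}\dot x_{j,c}+\frac{\partial z_{j,c}}{\partial y_{j,c}}\dot y_{j,c}$, which is exactly the third row of the $3\times2$ block $\left[\begin{smallmatrix}1&0\\0&1\\\partial z_{j,c}/\partial x_{j,c}&\partial z_{j,c}/\partial y_{j,c}\end{smallmatrix}\right]$ acting on $(\dot x_{j,c},\dot y_{j,c})^T$. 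Composing this with the previous step gives $\dot{\mathbf r}_{j,c}=\mathbf H_{j,\mathrm{CEM}}\dot{\mathbf q}_{\mathrm{CEM}}$, i.e., \eqref{MAINNNNN} with $\gamma=\mathrm{CEM}$ and the shape matrix \eqref{IdealFluidFlowDesired}; the dimension count $2+3(N-N_F)+1=3+3(N-N_F)$ confirms the block sizes match. The computation is routine once set up; the main (though modest) effort is bookkeeping the chain rule through all the time-varying uniform and doublet parameters and keeping the sign pattern of $\mathbf J^{-1}$ consistent, which is where I would expect the only real subtlety to lie.
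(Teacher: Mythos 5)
Your proposal is correct and follows essentially the same route as the paper: write the total time derivative of $\phi_{j,c}$ (equal to the sliding speed $v_\phi$) together with the streamline condition \eqref{StickingCondition}, solve the resulting $2\times 2$ system by inverting $\mathbf{J}$, expand $\partial\phi/\partial t$ over the time-varying uniform and doublet parameters, and lift to the $z$-component via the surface constraint. You actually spell out the Cramer's-rule step and the $z$-row more explicitly than the paper does, but there is no substantive difference in approach.
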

\begin{proof}
Per the prescribed CEM protocol vehicle $i$ slides along the stream line $\psi_{j,c}=\psi_{j,0}$ at any time $t$. Eq. \eqref{StickingCondition} must be satisfied at every point $(x_{j,c},y_{j,c})$ and any time $t$. Given the sliding speed $v_{\phi}$, the following relation holds:
\begin{subequations}
\begin{equation}
    \begin{split}
    &\dfrac{\partial \phi}{\partial x_{j,c}}\dot{x}_{j,c}+\dfrac{\partial \phi}{\partial y_{j,c}}\dot{y}_{j,c}=- \dfrac{\partial \phi}{\partial u_{\infty}}\dot{u}_{\infty}- \dfrac{\partial \phi}{\partial \theta_{\infty}}\dot{\theta}_{\infty}\\
    &-\sum_{i\in \mathcal{V}_A}\left(\dfrac{\partial \phi}{\partial a_i}\dot{a}_i+\dfrac{\partial \phi}{\partial b_i}\dot{b}_i+\dfrac{\partial \phi}{\partial \Delta_i}\dot{\Delta}_i\right)+v_{\phi},\\
\end{split}
\end{equation}
\begin{equation}
    \begin{split}
    &\dfrac{\partial \psi}{\partial x_{j,c}}\dot{x}_{j,c}+\dfrac{\partial \psi}{\partial y_{j,c}}\dot{y}_{j,c}=0.
\end{split}
\end{equation}
\end{subequations}
Therefore, $x$ and $y$ components of agent $j\in \mathcal{V}_H$  global desired trajectory are updated by \eqref{MAINNNNN}, where $\mathbf{H}_{j,\mathrm{CEM}}$ and $\dot{\mathbf{q}}_{\mathrm{CEM}}$ are given by Eq. \eqref{IdealFluidFlowDesired} for agent $j\in \mathcal{V}_H$ at any time $t$.
\end{proof}
Design parameters $\dot{u}_\infty$, $\dot{\theta}_{\infty}$, $\dot{\Delta}_{i}$, $\dot{a}_{i}$, $\dot{b}_{i}$ ($i\in \mathcal{V}_A$), and $v_{\phi}$, {\color{black}obtained by taking time derivative from the generalized coordinates,} define group desired coordination for CEM. Note that $u_{\infty}$ and $\theta_{\infty}$ can be designed so that the ideal fluid flow coordination is optimized. However, the remaining design parameters are uncontrolled. 
{\color{black}
\begin{remark}
In general, design parameters 
 $\dot{u}_\infty$, $\dot{\theta}_{\infty}$, $\dot{\Delta}_{i}$, $\dot{a}_{i}$, $\dot{b}_{i}$ ($i\in \mathcal{V}_A$) can vary with time. 
However, this paper concentrates only on the steady-state CEM which will be achieved when $\dot{u}_\infty$, $\dot{\theta}_{\infty}$, $\dot{\Delta}_{i}$, $\dot{a}_{i}$, $\dot{b}_{i}$ ($i\in \mathcal{V}_A$) are all zeros. 
Therefore, potential and stream functions are defined by Eqs. \eqref{PHIU}, and \eqref{phipsi} simplifies to
\[
j\in \mathcal{V}_H,\qquad 
\begin{bmatrix}
    \dot{x}_{j,c}\\
    \dot{y}_{j,c}\\
    \dot{z}_{j,c}\\
    \end{bmatrix}
    =\begin{bmatrix}
       1&0\\
       0&1\\
       \frac{\partial z_{j,c}}{ \partial x_{j,c}}& \frac{\partial z_{j,c}}{ \partial y_{j,c}}
 \end{bmatrix}
   \mathbf{H}_{T,j}v_{\phi}.
\]
This requires an assumption for this work that the failed vehicle $i\in \mathcal{V}_A$ remains inside a predictable closed domain, with time-invariant geometry, until the time the failed agent is no longer in containment domain $\Omega_{\mathrm{con}}$ defined per Eq. \eqref{containmentdoamin}.

\end{remark}
}

\section{Continuum Deformation Anomaly Management}\label{Continuum Deformation Anomaly Management}
This section develops a hybrid model to manage transitions between CEM and HDM.
Section \ref{Anomaly Detection} develops a distributed   approach to detect a vehicle failure/anomaly followed by a supervisory control transition approach described in Section \ref{Supervisory Control}.

\subsection{Anomaly Detection}
\label{Anomaly Detection}
In this sub-section, we present a distributed model to detect situations in which agents have failed or are no longer cooperative. We then consider these agents anomalous or failed and add them to anomalous agent set $\mathcal{V}_A$. 

Consider an $n$-D homogeneous deformation where follower $i$ knows its own position and positions of  in-neighbor agents $\mathcal{N}_i=\{j_1,\cdots j_{n+1}\}$ at any time $t$. Let actual position  $\mathbf{r}_i(t)$ be expressed as the convex combination of agent $i$'s in-neighbors by 
\begin{subequations}
\begin{equation}
    i\in \mathcal{V}_F,\qquad \mathbf{r}_i(t)=\sum_{k=1}^{n+1}\varpi_{i,j_k}(t)\mathbf{r}_{j_k}.
\end{equation}
\begin{equation}
    i\in \mathcal{V}_F,\qquad \sum_{k=1}^{n+1}\varpi_{i,j_k}(t)=1,
\end{equation}
\end{subequations}
{\color{black}where $\varpi_{i,j_1}$ through $\varpi_{i,j_{n+1}}$ are called \textit{transient weights.}} If $\varkappa_n\left(\mathbf{r}_{j_1}(t),\cdots,\mathbf{r}_{j_{n+1}}(t)\right)=n$, transient weights $\varpi_{i,j_1}$ through $\varpi_{i,j_{n+1}}$ can be assigned based on agents' actual positions as follows:
\begin{equation}
\label{varpiiiiiiiii}
    \begin{bmatrix}
          \varpi_{i,j_1}\left(t\right)&\cdots&\varpi_{i,j_4}\left(t\right)
    \end{bmatrix}
    ^T
    =\mathbf{\Lambda}\left(\mathbf{r}_{j_1},\mathbf{r}_{j_2},\mathbf{r}_{j_3},\mathbf{r}_{j_4}^n,\mathbf{r}_{i}^n\right),
\end{equation}
where
\begin{subequations}
 \label{9aaac}
 \begin{equation}
     \mathbf{r}_{j_4}^n\left(t\right)=
     \begin{cases}
     \mathbf{r}_{j_4}\left(t\right)&n=3\\
     \mathbf{r}_{j_1}+\Xi\left(\mathbf{r}_{j_3,0}-\mathbf{r}_{j_1,0}\right)\times \left(\mathbf{r}_{j_2,0}-\mathbf{p}_{j_1,0}\right)&n=2
     \end{cases}
     ,
 \end{equation}
 \begin{equation}
     \mathbf{r}_{i,0}^n\left(t\right)=
     \begin{cases}
     \mathbf{r}_{i}&n=3\\
      {\mathbf{r}}_{i}-\left({\mathbf{r}}_{i}\cdot\mathbf{n}_{1-4}\right)\mathbf{n}_{1-4}&n=2
     \end{cases}
     ,
 \end{equation}
\end{subequations}
\begin{equation}
\label{rankkkkkkkkkkkkkkkk}
\Gamma_3\left(\mathbf{r}_{j_1}(t),\mathbf{r}_{j_2}(t),\mathbf{r}_{j_3}(t),\mathbf{r}_{j_4}(t),\right)=3.
\end{equation}
and $\mathbf{n}_{1-4}=\mathbf{n}_{1-4}(\mathbf{r}_{j_1},\mathbf{r}_{j_2},\mathbf{r}_{j_3})$ is determined based on agents' actual positions using Eq. \eqref{n144} when $n=2$.
\\
{\color{black}
\textbf{{Geometric Interpretation of Transient Weights}}: Let $d_{i,j_2,j_3}(t)$, $d_{i,j_3,j_1}(t)$, and $d_{i,j_1,j_2}(t)$ denote distances of point $i$ from the triangle sides $j_2-j_3$, $j_3-j_1$, and $j_1-j_2$, respectively. Assume $l_{j_1,j_2,j_3}(t)$, $l_{j_2,j_3,j_1}(t)$, $l_{j_3,j_1,j_2}(t)$ determine distances of vertices $j_1$, $j_2$, and $j_3$ from the triangle sides $j_2-j_3$,  $j_3-j_1$, $j_1-j_2$, respectively. Then, 
\begin{subequations}
 \begin{equation}
     \varpi_{i,j_1}(t)=\dfrac{d_{i,j_2,j_3}}{l_{j_1,j_2,j_3}},
 \end{equation}
 \begin{equation}
     \varpi_{i,j_2}(t)=\dfrac{d_{i,j_3,j_1}}{l_{j_2,j_3,j_1}},
 \end{equation}
 \begin{equation}
   \varpi_{i,j_3}(t)=\dfrac{d_{i,j_1,j_2}}{l_{j_3,j_1,j_2}}.
 \end{equation}
\end{subequations}

    
    
Geometric representations of $d_{i,j_2j_3}(t)$ and $l_{j_1,j_2j_3}(t)$ are shown in Fig.  \ref{PROOFMAIN} (a) when $n=2$.

For $n=3$, $d_{i,j_2,j_3,j_4}(t)$, $d_{i,j_3,j_4j_1}(t)$, $d_{i,j_4,j_1,j_2}(t)$, and $d_{i,j_1,j_2,j_3}(t)$ denote distance of point $i$ from the triangular surfaces $j_2-j_3-j_4$, $j_3-j_4-j_1$, $j_4-j_1-j_2$, $j_1-j_2-j_3$, respectively. Assume $l_{j_1,j_2,j_3,j_4}(t)$, $l_{j_2,j_3,j_4,j_1}(t)$, $l_{j_3,j_4,j_1,j_2}(t)$, and $l_{j_4,j_1,j_2,j_3}$ determine distance of vertices $j_1$, $j_2$, $j_3$, and $j_4$, from the triangular surfaces $j_2-j_3-j_4$, $j_3-j_4-j_1$, $j_4-j_1-j_2$, and $j_1-j_2-j_3$ respectively. Then,
\begin{subequations}
 \begin{equation}
     \varpi_{i,j_1}(t)=\dfrac{d_{i,j_2,j_3,j_4}}{l_{j_1,j_2,j_3,j_4}},
 \end{equation}
 \begin{equation}
     \varpi_{i,j_2}(t)=\dfrac{d_{i,j_3,j_4,j_1}(t)}{l_{j_2,j_3,j_4,j_1}(t)},
 \end{equation}
 \begin{equation}
     \varpi_{i,j_3}(t)=\dfrac{d_{i,j_4,j_1,j_2}(t)}{l_{j_3,j_4,j_1,j_1}(t)},
 \end{equation}
  \begin{equation}
    \varpi_{i,j_4}(t)=\dfrac{d_{i,j_4,j_1,j_2}(t)}{l_{j_3,j_4,j_1,j_1}(t)}.
 \end{equation}
\end{subequations}
    }    

\begin{theorem}\label{vapircomweight}
Assume HDM collective motion is guided by $n+1$ leaders, defined by $\mathcal{V}_L$, every follower $i\in \mathcal{V}_F$, communicates with $n+1$ in-neighbor agents, defined by $\mathcal{N}_i=\{j_1,\cdots,j_{n+1}\}$, where follower $i$'s in-neighbors form an $n$-D simplex at time $t$. If deviation of every agent $i\in \mathcal{V}$ from the global desired position $\mathbf{r}_{i,c}$ is less than $\Delta$ at time $t$ ($\left|\mathbf{r}_i(t)-\mathbf{r}_{i,c}(t)\right|\leq \Delta$), then followers' communication weights satisfy the following inequality:
    \begin{equation}
       \varpi_{i,j_1}^{\mathrm{min}}(t)\leq  w_{i,j_1}\leq \varpi_{i,j_1}^{\mathrm{max}}(t),
    \end{equation}
    where $w_{i,j}$ is constant communication wight of follower $i\in \mathcal{V}$ with in-neighbor $j_1$ assigned by Eq. \eqref{ComWeights}, and
    \begin{subequations}
    \begin{equation}
        \varpi_{i,j_1}^{\mathrm{min}}(t)=
        \begin{cases}
        \dfrac{d_{i,j_2,j_3}(t)-\Delta}{l_{j_1,j_2,j_3}(t)+2\Delta}&n=2\\
        \dfrac{d_{i,j_2,j_3,j_4}(t)-\Delta}{l_{j_1,j_2,j_3,j_4}+2\Delta}&n=3\\
        \end{cases}
        ,
    \end{equation}
    \begin{equation}
        \varpi_{i,j_1}^{\mathrm{max}}(t)=
        \begin{cases}
        \dfrac{d_{i,j_2,j_3,j_4}(t)+2\Delta}{l_{j_1,j_2,j_3}(t)-\Delta}&n=2\\
        \dfrac{d_{i,j_2,j_3,j_4}(t)+2\Delta}{l_{j_1,j_2,j_3,j_4}-\Delta}&n=3\\
        \end{cases}
        .
    \end{equation}
    \end{subequations}
\end{theorem}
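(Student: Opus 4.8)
The plan is to translate the statement about the constant weight $w_{i,j_1}$ into a statement about the geometry of the \emph{global desired} configuration, and then control that geometry with the actual-configuration geometry using the hypothesis $\left|\mathbf{r}_i(t)-\mathbf{r}_{i,c}(t)\right|\le\Delta$.

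First I would invoke the Key Property of Homogeneous Deformation. Since HDM coordination is the affine map $\mathbf{r}_{\cdot,c}(t)=\mathbf{Q}(t)\mathbf{r}_{\cdot,0}+\mathbf{d}(t)$ and the weights $w_{i,j_1},\dots,w_{i,j_{n+1}}$ are, by Eq.~\eqref{ComWeights}, the barycentric coordinates of $\mathbf{r}_{i,0}$ with respect to the reference in-neighbor simplex (they are nonnegative and sum to $1$), applying $\mathbf{Q}(t)(\cdot)+\mathbf{d}(t)$ to $\mathbf{r}_{i,0}=\sum_{k}w_{i,j_k}\mathbf{r}_{j_k,0}$ yields $\mathbf{r}_{i,c}(t)=\sum_{k=1}^{n+1}w_{i,j_k}\mathbf{r}_{j_k,c}(t)$ with the \emph{same} weights. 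Hence $w_{i,j_1}$ is exactly the transient weight one would compute in the global desired configuration, so by the Geometric Interpretation of Transient Weights applied to $\{\mathbf{r}_{k,c}(t)\}$ we get $w_{i,j_1}=d^{c}_{i,j_2,j_3}(t)/l^{c}_{j_1,j_2,j_3}(t)$ for $n=2$ (and the analogous triangular-face distance ratio $d^{c}_{i,j_2,j_3,j_4}/l^{c}_{j_1,j_2,j_3,j_4}$ for $n=3$), where a superscript $c$ means the point-to-opposite-face distances are evaluated at the desired positions. The desired simplex is non-degenerate because $\mathbf{Q}(t)$ is nonsingular, so $l^{c}>0$ and the ratio is well defined.

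Next I would estimate $d^{c}$ and $l^{c}$ against the actual quantities $d_{i,j_2,j_3}(t)$ and $l_{j_1,j_2,j_3}(t)$. Each agent in $\{i,j_1,\dots,j_{n+1}\}$ satisfies $\left|\mathbf{r}_k(t)-\mathbf{r}_{k,c}(t)\right|\le\Delta$. For the numerator, let $q$ be the point of the affine span of the opposite face (a line when $n=2$, a plane when $n=3$) that realizes $d^{c}$; because $i$ lies inside the non-degenerate in-neighbor simplex, $q$ is a convex combination of the opposite-face vertices, so displacing those vertices by at most $\Delta$ displaces $q$ by at most $\Delta$, and combining this with the $\le\Delta$ displacement of $\mathbf{r}_{i,c}$ and the triangle inequality for distance to an affine set gives $d_{i,j_2,j_3}(t)-\Delta\le d^{c}\le d_{i,j_2,j_3}(t)+2\Delta$. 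Interchanging the roles of $i$ and $j_1$ gives the same kind of estimate for the altitude, $l_{j_1,j_2,j_3}(t)-\Delta\le l^{c}\le l_{j_1,j_2,j_3}(t)+2\Delta$. Substituting the extreme values into $w_{i,j_1}=d^{c}/l^{c}$, and using that the resulting denominators stay positive, yields $\dfrac{d_{i,j_2,j_3}(t)-\Delta}{l_{j_1,j_2,j_3}(t)+2\Delta}\le w_{i,j_1}\le\dfrac{d_{i,j_2,j_3}(t)+2\Delta}{l_{j_1,j_2,j_3}(t)-\Delta}$, i.e., $\varpi^{\min}_{i,j_1}(t)\le w_{i,j_1}\le\varpi^{\max}_{i,j_1}(t)$. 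The $n=3$ argument is verbatim the same with the triangular faces $j_2$–$j_3$–$j_4$ (etc.) in place of the segment $j_2$–$j_3$.

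The main obstacle is the perturbation accounting in the third step: one must argue that the point realizing the point-to-opposite-face distance may be taken to be a convex combination of that face's vertices, so its displacement is controlled by $\Delta$ rather than by the diameter of the formation. This is exactly where interiority of $i$ in the non-degenerate in-neighbor simplex $\mathcal{N}_i$ (guaranteed by the $\mathbf{\Lambda}>\varrho_n\mathbf{1}_4$ construction) is used, and it is also what forces the denominators $l_{j_1,j_2,j_3}(t)-\Delta$ (resp. the $n=3$ counterpart) to remain positive when $\Delta$ is small relative to the simplex altitude — the same smallness regime in which the collision-avoidance bound built on Theorem~\ref{theorem1} is meaningful. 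Once these distance estimates are in hand, the conclusion is a direct substitution into the ratio $d^{c}/l^{c}$.
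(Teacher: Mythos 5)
Your proof follows essentially the same route as the paper's: identify the constant weight $w_{i,j_1}$ with the point-to-opposite-face distance ratio evaluated at the \emph{global desired} positions (the paper writes $w_{i,j_1}=D_{i,j_2,j_3}(t)/L_{j_1,j_2,j_3}(t)$ for the desired triangle/tetrahedron), and then sandwich those desired-configuration distances between perturbations of the actual-configuration distances using the $\Delta$-deviation hypothesis. You are in fact more explicit than the paper on the first step (affine invariance of the barycentric coordinates under $\mathbf{Q}(t)(\cdot)+\mathbf{d}(t)$), and your asymmetric constants ($-\Delta$, $+2\Delta$) reproduce the bounds as stated in the theorem whereas the paper's own proof derives $\pm 2\Delta$ on both numerator and denominator; both arguments leave the same geometric subtlety (that the relevant foot of the perpendicular moves by $O(\Delta)$ even though the opposite face may rotate) at the heuristic level.
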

specify lower and upper bounds for transient weight $\varpi_{i,j_1}(t)$ at time $t$.
\begin{proof}
If $\mathbf{\mathbf{r}}_i(t)=\mathbf{r}_{i,c}(t)$ for every agent $i\in \mathcal{V}$ at any time $t$, then $\varpi_{i,j_k}(t)=w_{i,j_k}$ ($k=1,\cdots,n+1$,~$j_k\in \mathcal{N}_i$). For $n=2$, we define a desired triangle $j_1-j_2-j_3$ with vertices placed at $\mathbf{r}_{j_1,c}$, $\mathbf{r}_{j_2,c}$, and $\mathbf{r}_{j_3,c}$. $D_{i,j_2,j_3}$ denotes the distance between the global desired position of agent $i$ and the triangle side $j_2,j_3$ while $L_{j_1,j_2,j_3}$ denotes the distance between the desired position of agent $j_1$ and the side $j_2-j_3$ of the desired tetrahedron. We also define an ``actual'' triangle with vertices positioned at $\mathbf{r}_{j_1}$, $\mathbf{r}_{j_2}$, and $\mathbf{r}_{j_3}$. When $\|\mathbf{r}_i(t)-\mathbf{r}_{i,c}(t)\|\leq \Delta $ is satisfied for agent $i\in \mathcal{V}$, then, 
\begin{subequations}
 \begin{equation}
     d_{i,j_2,j_3}(t)-2\Delta\leq D_{i,j_2,j_3}(t)\leq d_{i,j_2,j_3}(t)+2\Delta,
 \end{equation}
 \begin{equation}
     l_{i,j_2,j_3}(t)-2\Delta\leq L_{i,j_2,j_3}(t)\leq l_{i,j_2,j_3}(t)+2\Delta.
 \end{equation}
\end{subequations}
Therefore, $w_{i,j_1}=\dfrac{D_{i,j_2,j_3}(t)}{L_{i,j_2,j_3}(t)}\in \left[\dfrac{d_{i,j_2,j_3}-2\Delta}{L_{i,j_2,j_3}+2\Delta},\dfrac{d_{i,j_2,j_3}+2\Delta}{L_{i,j_2,j_3}-2\Delta}\right]$ (See Fig. \ref{PROOFMAIN}).  For $n=3$, vertices of the desired tetrahedron $j_1-j_2-j_3-j_4$ are placed at $\mathbf{r}_{j_1,c}$, $\mathbf{r}_{j_2,c}$, $\mathbf{r}_{j_3,c}$, and $\mathbf{r}_{j_4,c}$; vertices of the ``actual'' tetrahedron are positioned at $\mathbf{r}_{j_1}$, $\mathbf{r}_{j_2}$,  $\mathbf{r}_{j_3}$, and $\mathbf{r}_{j_3}$.   $D_{i,j_1,j_2,j_3}$ denotes the distance between the global desired position of agent $i$ and the tetrahedron surface $j_2,j_3,j_4$. $L_{j_1,j_2,j_3}$ denotes the distance between the desired position of agent $j$ and the surface $j_2-j_3-j_4$ of the desired tetrahedron. 
Assuming every agent $i\in \mathcal{V}$ satisfies safety constraint \eqref{importantineq}, 
\begin{subequations}
 \begin{equation}
     d_{i,j_2,j_3,j_4}(t)-2\Delta\leq D_{i,j_2,j_3,j_4}(t)\leq d_{i,j_2,j_3,j_4}(t)+2\Delta,
 \end{equation}
 \begin{equation}
     l_{i,j_2,j_3,j_4}(t)-2\Delta\leq L_{i,j_2,j_3,j_4}(t)\leq l_{i,j_2,j_3,j_4}(t)+2\Delta,
 \end{equation}
\end{subequations}
Therefore, 
\[
w_{i,j_1}=\dfrac{D_{i,j_2,j_3,j_4}(t)}{L_{i,j_2,j_3,j_4}(t)}\in \left[\dfrac{d_{i,j_2,j_3,j_4}-2\Delta}{L_{i,j_2,j_3,j_4}+2\Delta},\dfrac{d_{i,j_2,j_3,j_4}+2\Delta}{L_{i,j_2,j_3,j_4}-2\Delta}\right].
\]
\begin{figure}[ht]
\centering
\subfigure[]{\includegraphics[width=0.8\linewidth]{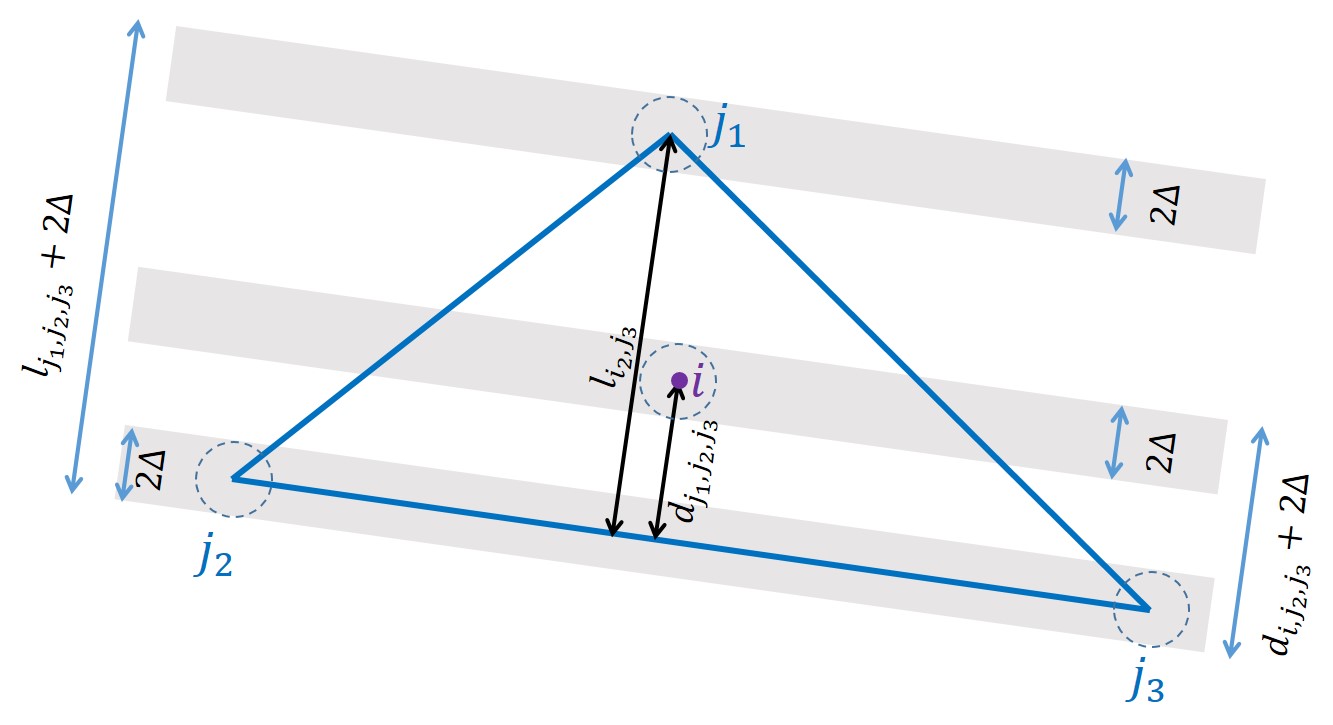}}
\subfigure[]{\includegraphics[width=0.8\linewidth]{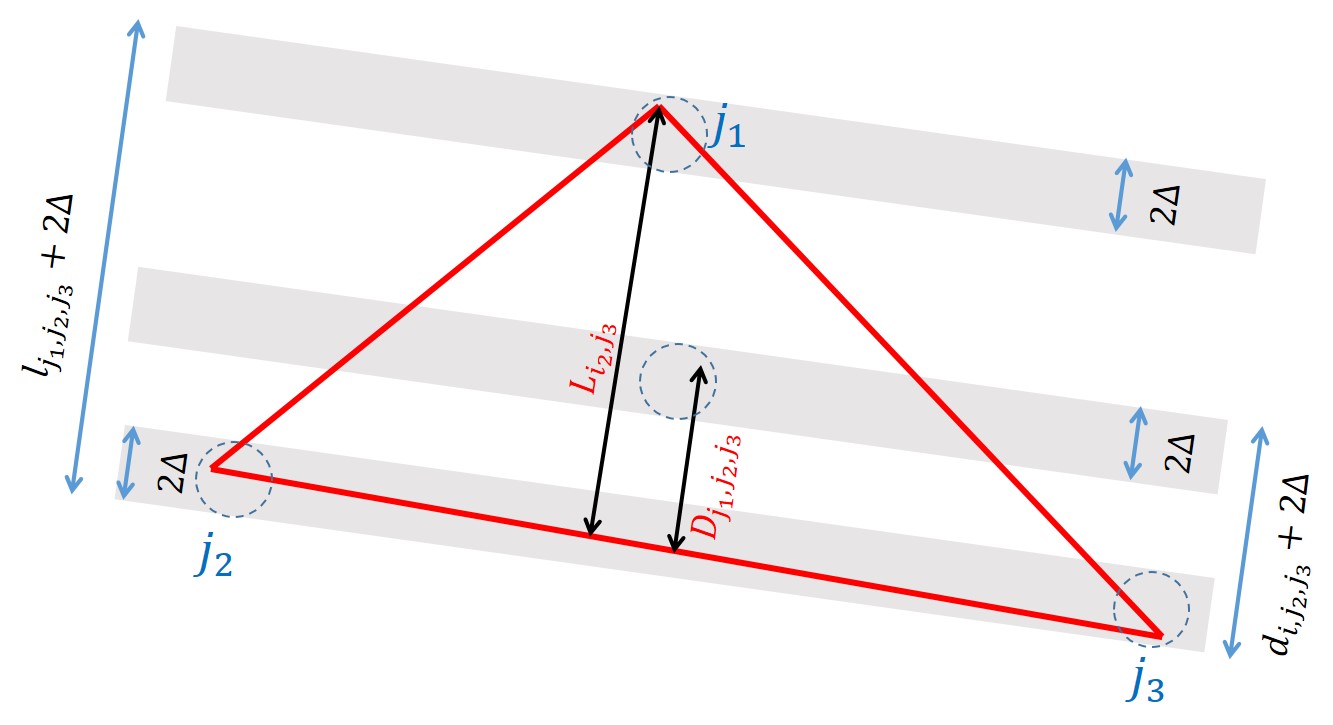}}
\vspace{-.4cm}
\caption{(a) ``Actual'' triangle constructed by the actual positions of  agents $j_1$, $j_2$, and $j_3$ at time $t$. (b) Desired triangle given by the global desired positions of agents $j_1$, $j_2$, and $j_3$ at time $t$.
}
 \label{PROOFMAIN}
\end{figure}
\end{proof}
Theorem \ref{vapircomweight} implies that HDM mode is active only if the following condition is satisfied:
\begin{equation}
    \label{HDMCondition}
    \forall i\in \mathcal{V},~k=1,\cdots,n+1,\qquad 
    \varpi_{i,j_k}^{\mathrm{min}}(t)\leq  w_{i,j_k}\leq \varpi_{i,j_k}^{\mathrm{max}}(t).\tag{$\Psi_{i,j_k}$},
\end{equation}
where $\mathcal{N}_i=\{j_1,\cdots,j_{n+1}\}$ defines in-neighbors of agent $i\in \mathcal{V}$. Therefore, if $\bigwedge_{i=1}^{N(t)}\bigwedge_{k=1}^{n+1}\Psi_{i,j_k}$ is satisfied at time $t$, HDM is active. {\color{black}Otherwise, an anomaly is detected.} Additionally, disjoint sets $\mathcal{V}_H$ and $\mathcal{V}_A$ are defined as follows:
\begin{subequations}
 \begin{equation}
     \mathcal{V}_{H}(t)=\left\{j\in \mathcal{V}(t)\bigg|\bigwedge_{h\in \mathcal{N}_h}\Psi_{i,j_k}~\mathrm{is~satisfied.}\right\},
 \end{equation}
 \begin{equation}
     \mathcal{V}_{A}(t)=\mathcal{V}(t)\setminus \mathcal{V}_{H}(t).
 \end{equation}
\end{subequations}

\begin{figure}[ht]
\centering
\includegraphics[width=3.3 in]{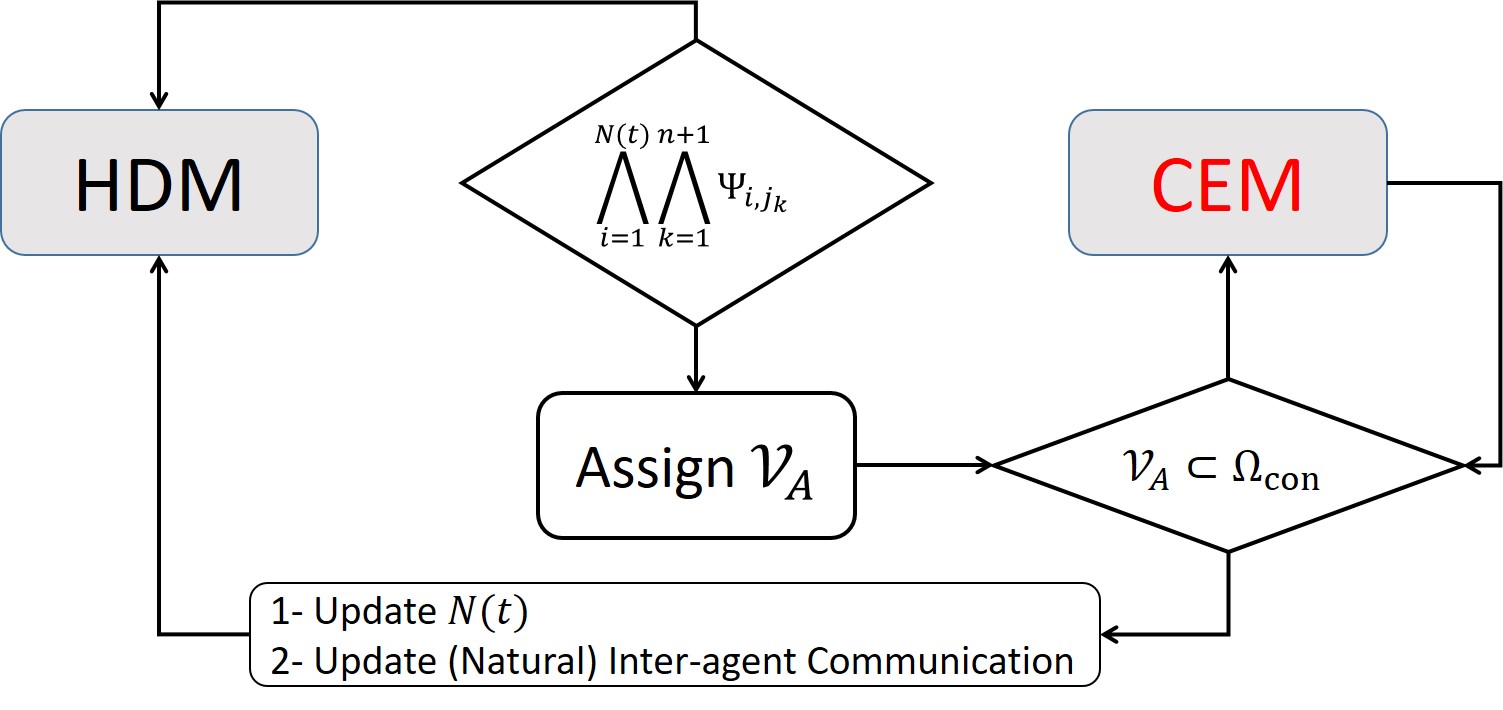}
\caption{Failed vehicle assignment and management by cooperative team leaders.}
\label{SupervisoryControl}
\end{figure}


\subsection{Vehicle Anomaly/Failure Management}
\label{Supervisory Control}
The Fig. \ref{SupervisoryControl} flowchart illustrates how vehicle failure can be managed by transition between ``HDM'' and ``CEM''. The following procedure is proposed:
\begin{enumerate}
    \item{Define containment domain $\Omega_{\mathrm{con}}\left(\mathbf{r},\mathbf{r}_{\mathrm{con}}(t)\right)$ using Eq. \eqref{containmentdoamin}.}
    \item{If there exists at least one failed agent inside the containment domain $\Omega_{\mathrm{con}}\left(\mathbf{r},\mathbf{r}_{\mathrm{con}}(t)\right)$, then 
    \[
    \bigwedge_{i=1}^{N(t)}\bigwedge_{k=1}^{n+1}\Psi_{i,j_k}
    \]
    is \underline{\textbf{not}} satisfied and CEM is activated.}
    \item{If agents contained by $\Omega_{\mathrm{con}}\left(\mathbf{r},\mathbf{r}_{\mathrm{con}}(t)\right)$ are all healthy, then $\bigwedge_{i=1}^{N(t)}\bigwedge_{k=1}^{n+1}\Psi_{i,j_k}$ is \textbf{satisfied} which in turn implies that $\mathcal{V}_A=\emptyset$ and HDM is active.}
\end{enumerate}

\section{Simulation Results}
\label{results}
 Consider collective motion in a $2$-D plane with invariant $z$ components for all agents at all times $t$. Suppose a multi-agent team consisting of $22$ vehicles is deployed with the initial formation shown in Fig. \ref{InitialFormation}. Given global desired positions of all agents at time $t$, the containment domain $\Omega_{\mathrm{com}}$ is defined for this case study as:
\[
\Omega_{\mathrm{con}}=\|\mathbf{r}-\mathbf{r}_{\mathrm{con}}\|_1\leq 40,
\]
where $\mathbf{r}_{\mathrm{con}}=\dfrac{1}{N(t)}\sum_{i\in \mathcal{V}_H}$ and $\|\cdot\|_1$ denotes the $1$-norm. Therefore, $\Omega_{\mathrm{con}}$ is a box with side length $80m$.

Without loss of generality, assume that every agent is a single integrator. The position of each agent $i$ is updated by
\begin{equation}
    i\in \mathcal{V},\qquad \dot{\mathbf{r}}_i=g(\mathbf{r}_{i,d}-\mathbf{r}_{i}),
\end{equation}
where $g=25$ is constant, $\mathbf{r}_i$ is the actual position of agent $i$, and local desired position $\mathbf{r}_{i,d}$ was defined in Eq. \eqref{LOCALDESIRED}. 
\begin{figure}[ht]
\centering
\subfigure[]{\includegraphics[width=0.9\linewidth]{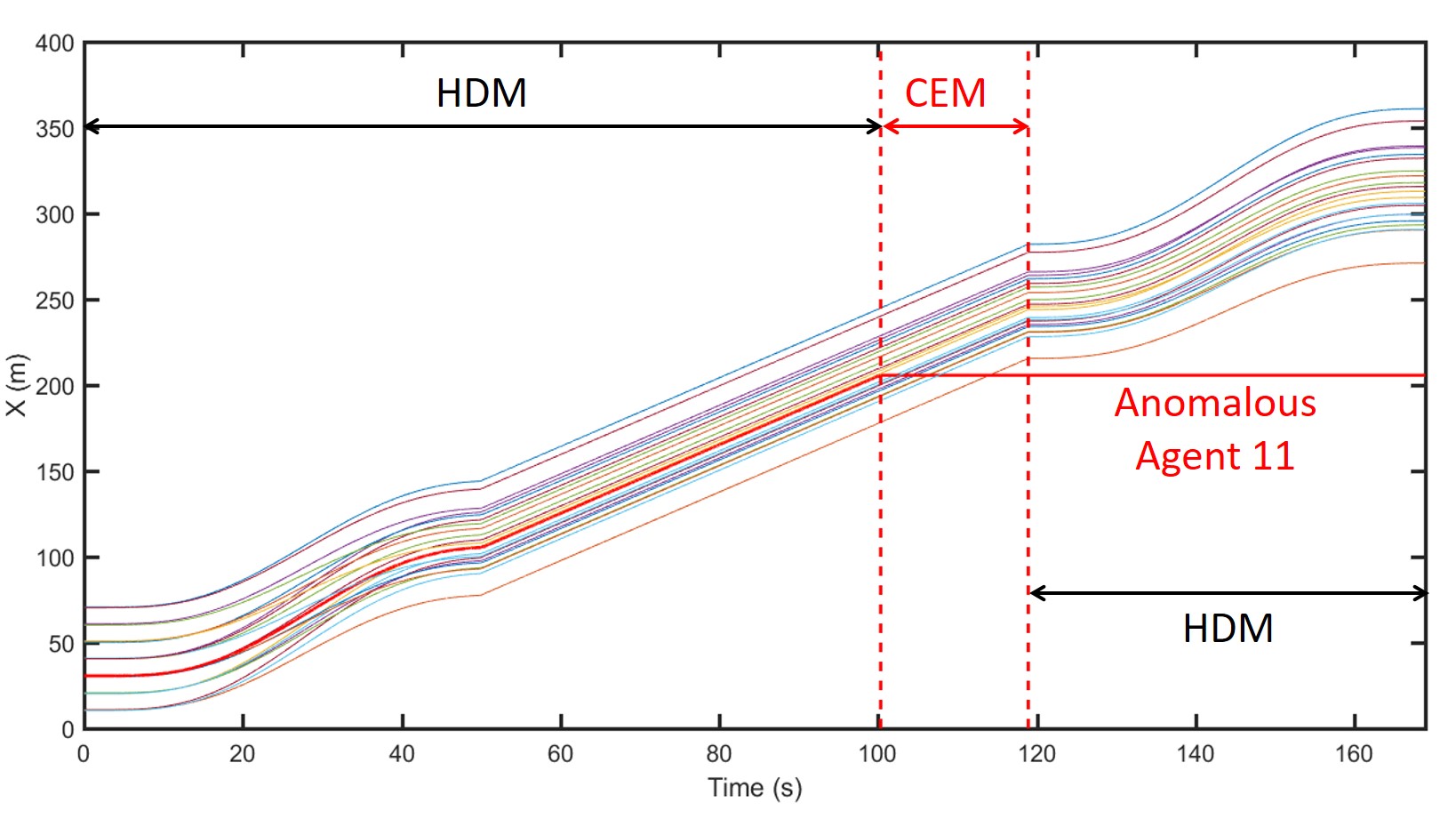}}
\subfigure[]{\includegraphics[width=0.9\linewidth]{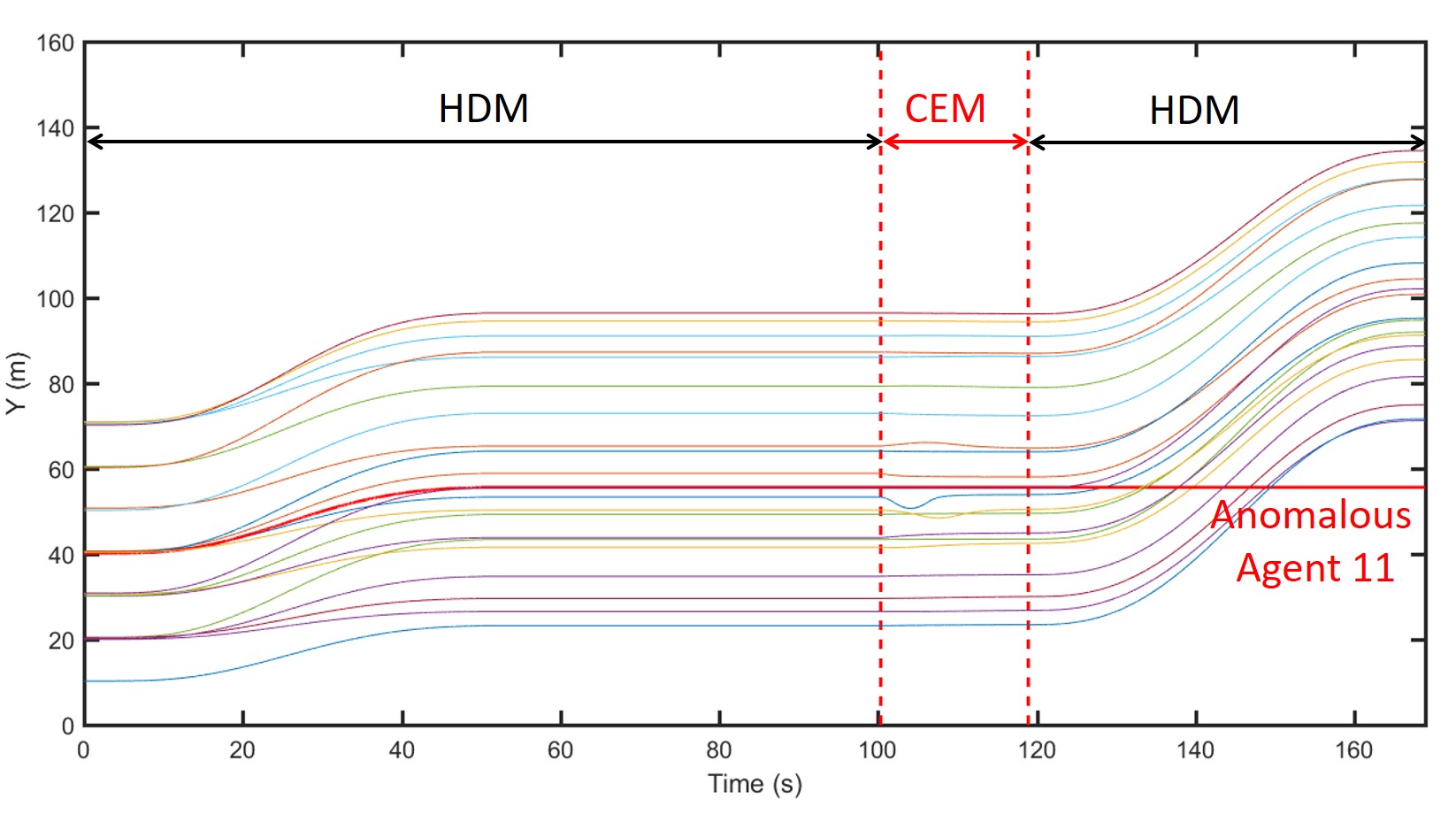}}
\vspace{-.4cm}
\caption{(a,b) $x$ and $y$ components of actual positions of agents versus time for $t\in [0,168.91]s$. HDM is initially active over $t\in [0,100]s$. Agent $11$ is flagged anomalous at time $t\in [100,100.35]s$ thus CEM is activated. At $t=118.92s$, agent $11$ is no longer inside the containment box $\Omega_{\mathrm{con}}$. Therefore, HDM is activated.
}
 \label{XYHDMCPMHDMV}
\end{figure}

\begin{figure}[ht]
\centering
\includegraphics[width=3.3 in]{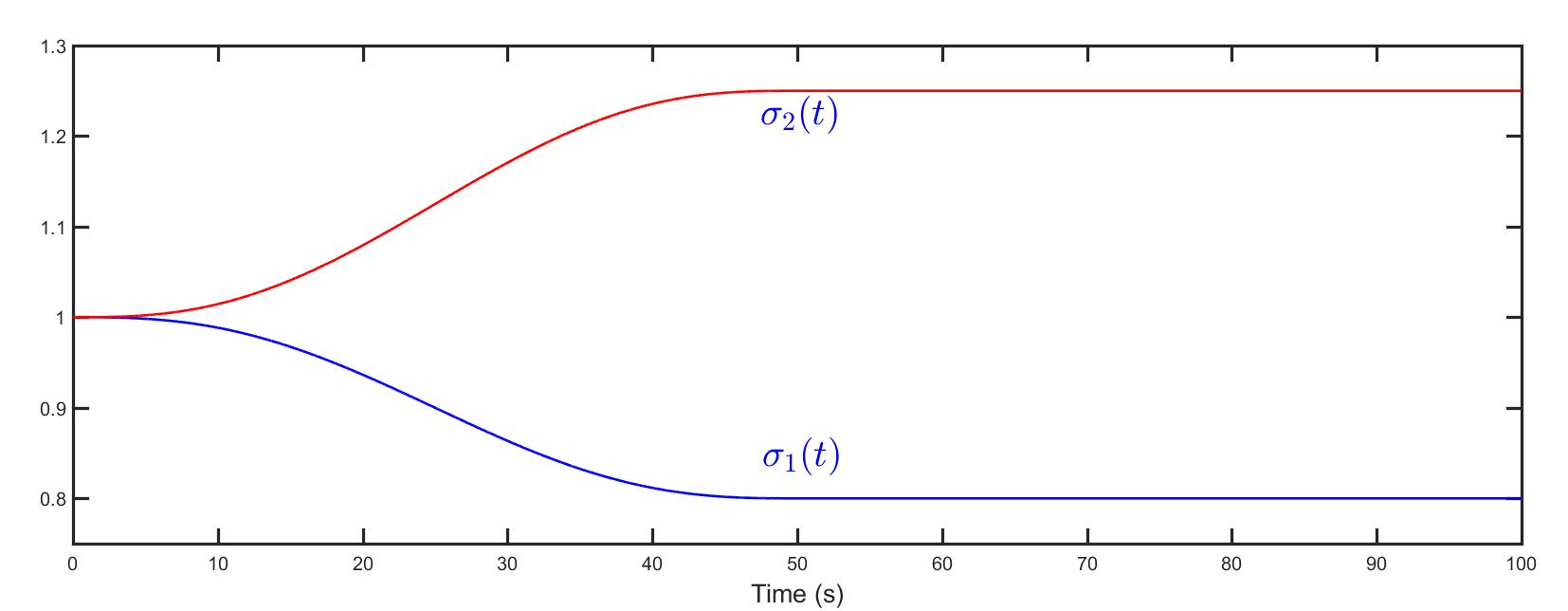}
\caption{Homogeneous deformation eigenvalues $\sigma_1$ and $\sigma_2$ versus time for $t\in [0,100]s$.}
\label{HDMEigenvalues}
\end{figure}
\subsection{Motion Phase 1 (HDM)}
Team collective motion is defined by a homogeneous transformation over $t\in [0,100]$, where agents are all healthy. Agents $i_1=1$, $i_2=2$, and $i_3=3$ are the leaders defining the homogeneous transformation. Given leaders' desired trajectories, eigenvalues of the desired homogeneous deformation coordination, denoted by $\sigma_1$ and $\sigma_2$, are plotted versus time in Fig. \ref{HDMEigenvalues}. Note that $\sigma_3(t)=1$ at any time $t$ because agents are treated as particles of a $2$-D continuum and the desired homogenous deformation coordination is also two dimensional. Follower vehicles apply the communication graph shown in Fig. \ref{InitialFormation} to acquire the desired coordination by local communication. The communication graph is strictly $3$-reachable per Section \ref{Homogeneous Deformation Mode}. Given initial positions of all agents, every follower chooses three in-neighbor agents using the approach described in Section \ref{Homogeneous Deformation Mode}. Consequently, the graph shown in Fig. \ref{Homogeneous Deformation Mode} assigns inter-agent communication, where  followers' communication weights are consistent with agents' positions at reference time $t=0$ and obtained by \eqref{ComWeights}. As shown Fig. \ref{PROOFMAIN}, HDM is active before an anomaly situation arises at time $t=100s$. 
\subsection{Motion Phase 2 (CEM)}
Suppose agent $11$ fails at time $t=100$. This failure is quickly detected by the team using the distributed failure detection method developed in Section  \ref{Continuum Deformation Anomaly Management}. As shown in Figs.  \ref{Weightsss} (a),(c),(d), conditions $\varpi_{11,13}^{\mathrm{min}}(t)\leq  w_{11,13}\leq \varpi_{11,13}^{\mathrm{max}}(t)$, $\varpi_{11,8}^{\mathrm{min}}(t)\leq  w_{11,8}\leq \varpi_{11,8}^{\mathrm{max}}(t)$, and $\varpi_{11,6}^{\mathrm{min}}(t)\leq  w_{11,6}\leq \varpi_{11,6}^{\mathrm{max}}(t)$ $\varpi_{11}$ are satisfied over $t\in [0,100]s$. However, condition $\varpi_{11,6}^{\mathrm{min}}(t)\leq  w_{11,6}\leq \varpi_{11,6}^{\mathrm{max}}(t)$ is violated at $t=100.34$ when $\varpi_{11,6}^{\mathrm{min}}(100.34)>w_{11,6}$. Therefore, CEM is activated, and healthy agent coordination is treated as an ideal fluid flow after $100.35s$. The ideal fluid flow coordination is defined by combining ``Uniform'' and ``Doublet'' flow patterns. Anomalous agent $11$ is wrapped by a disk of radius $a=4m$ resulted from choosing $u_\infty=10$, and {\color{black}$\delta=160$}, i.e. {\color{black}$a=\sqrt{\dfrac{\delta}{u_\infty}}=4m$}.  The remaining healthy vehicles slide along level curves $\psi_{i,c}(t)=\psi_{i,0}$, where each $\psi_{i,0}$ is determined based on agent $i$'s position at $t=100.35s$.

\begin{figure}[ht]
\centering
\subfigure[]{\includegraphics[width=0.73\linewidth]{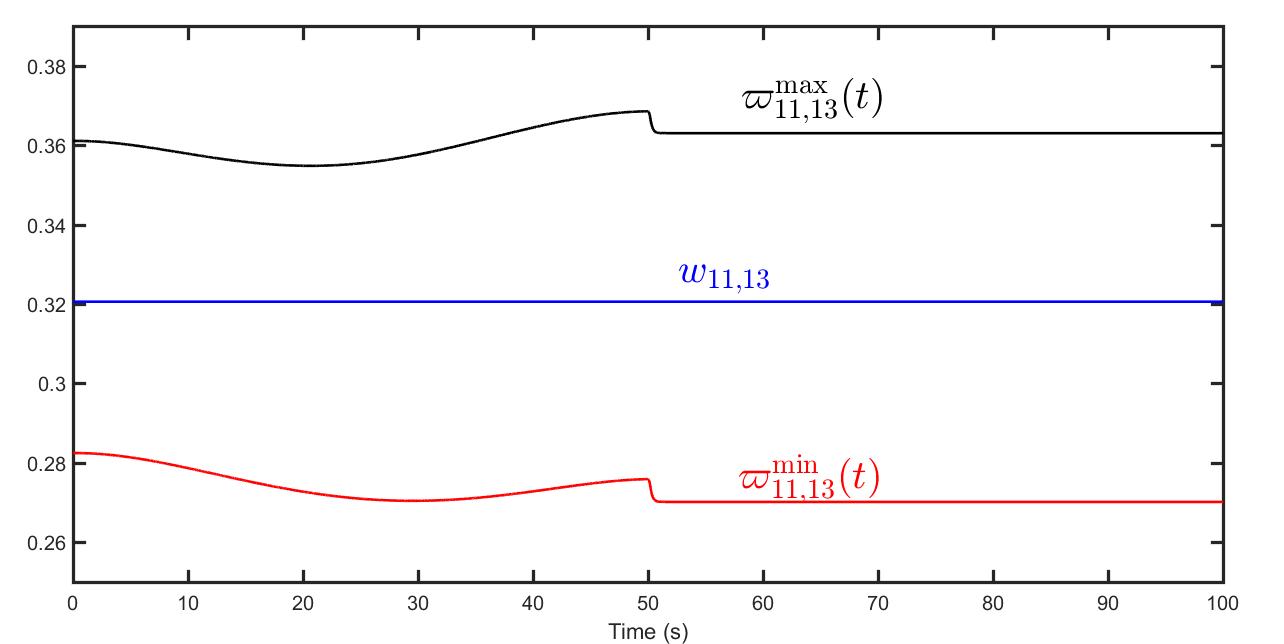}}
\subfigure[]{\includegraphics[width=0.25\linewidth]{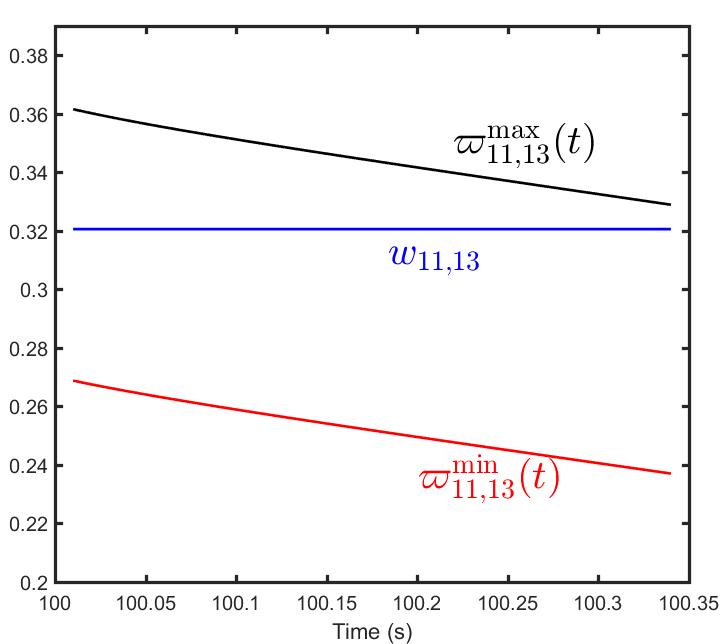}}
\subfigure[]{\includegraphics[width=0.73\linewidth]{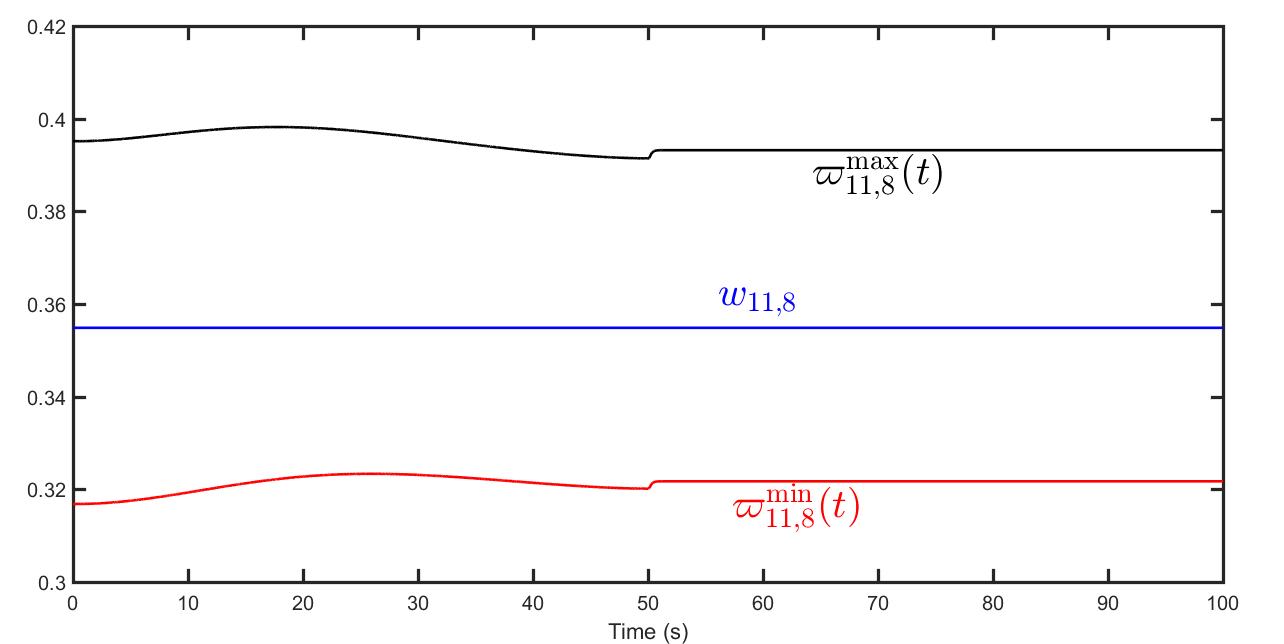}}
\subfigure[]{\includegraphics[width=0.25\linewidth]{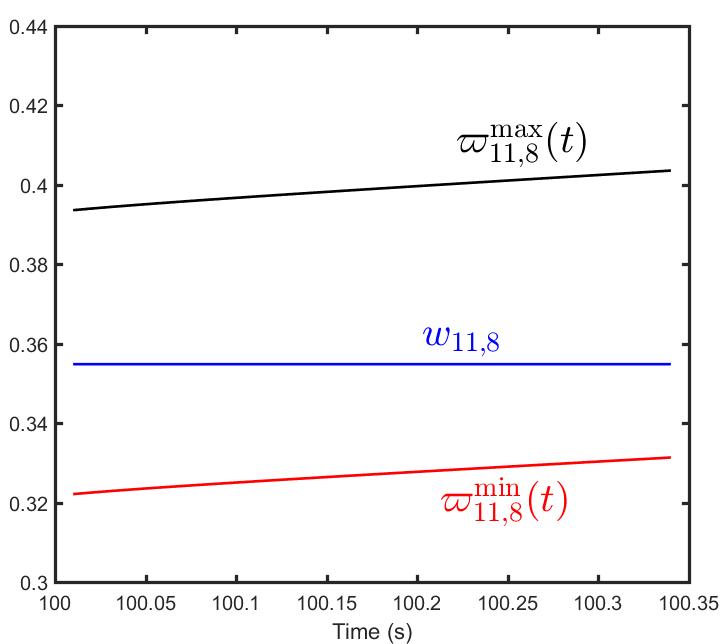}}
\subfigure[]{\includegraphics[width=0.73\linewidth]{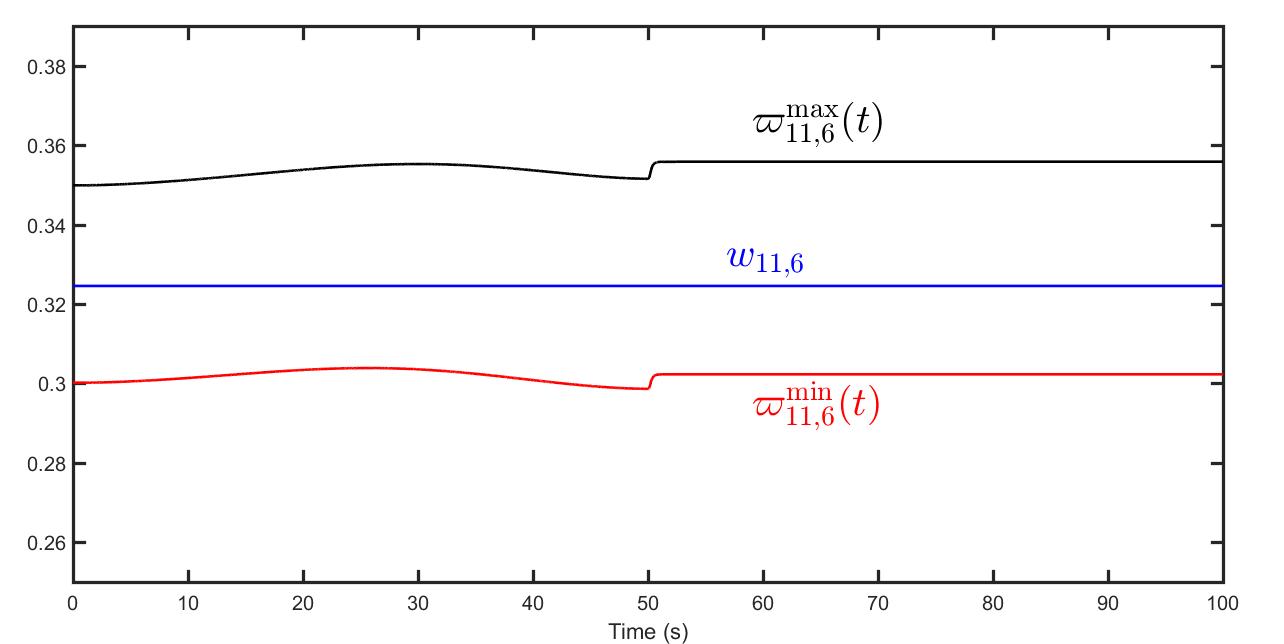}}
\subfigure[]{\includegraphics[width=0.25\linewidth]{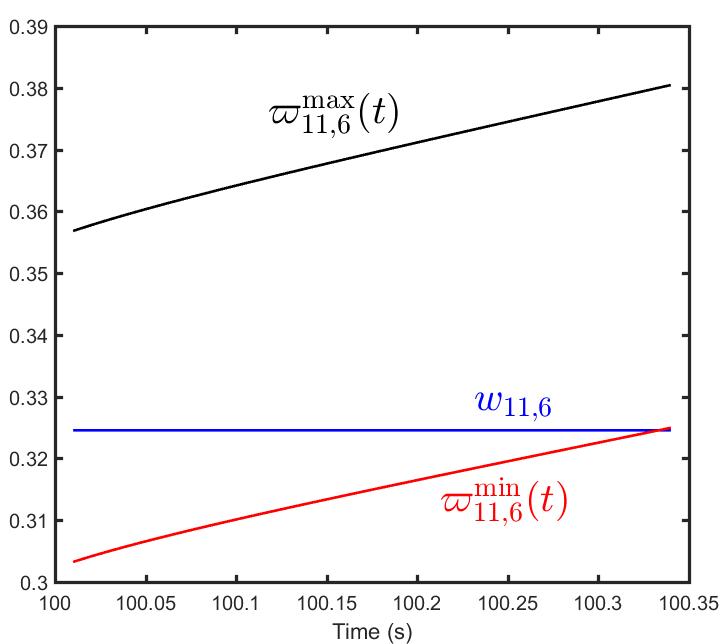}}
\vspace{-.4cm}
\caption{Weights $w_{11,13}$, $\varpi_{11,13}^{\mathrm{min}}(t)$, and $\varpi_{11,13}^{\mathrm{max}}(t)$ for (a) $t\in [0,100]$ and (b) $t\in [100.01,100.35]$. Weights $w_{11,8}$, $\varpi_{11,8}^{\mathrm{min}}(t)$, and $\varpi_{11,8}^{\mathrm{max}}(t)$ for (c) $t\in [0,100]$ and (d) $t\in [100.01,100.35]$. Weights $w_{11,6}$, $\varpi_{11,6}^{\mathrm{min}}(t)$, and $\varpi_{11,6}^{\mathrm{max}}(t)$ for (e) $t\in [0,100]$ and (f) $t\in [100.01,100.35]$. Anomalous motion in agent $11$ is detected in $0.34s$ when $\varpi_{11,6}^{\mathrm{min}}(100.34)>w_{11,6}$.}
 \label{Weightsss}
\end{figure}
In Fig. \ref{CEMEvolution}, actual paths of the healthy agents, defined by $\mathcal{V}_H=\{1,\cdots,10,12,\cdots,22\}$ are shown for $t\in  [100.35,118.92]$. Green markers show positions of healthy agents at $t=100.35s$ when they enter CEM;  black markers show positions of healthy agents at $t=118.92s$ when CEM ends. Failed agent $11$ is wrapped by a disk of radius {\color{black}$4m$} centered at $(205.26,55.62)$ in this example.
\begin{figure}[ht]
\centering
\includegraphics[width=3.3 in]{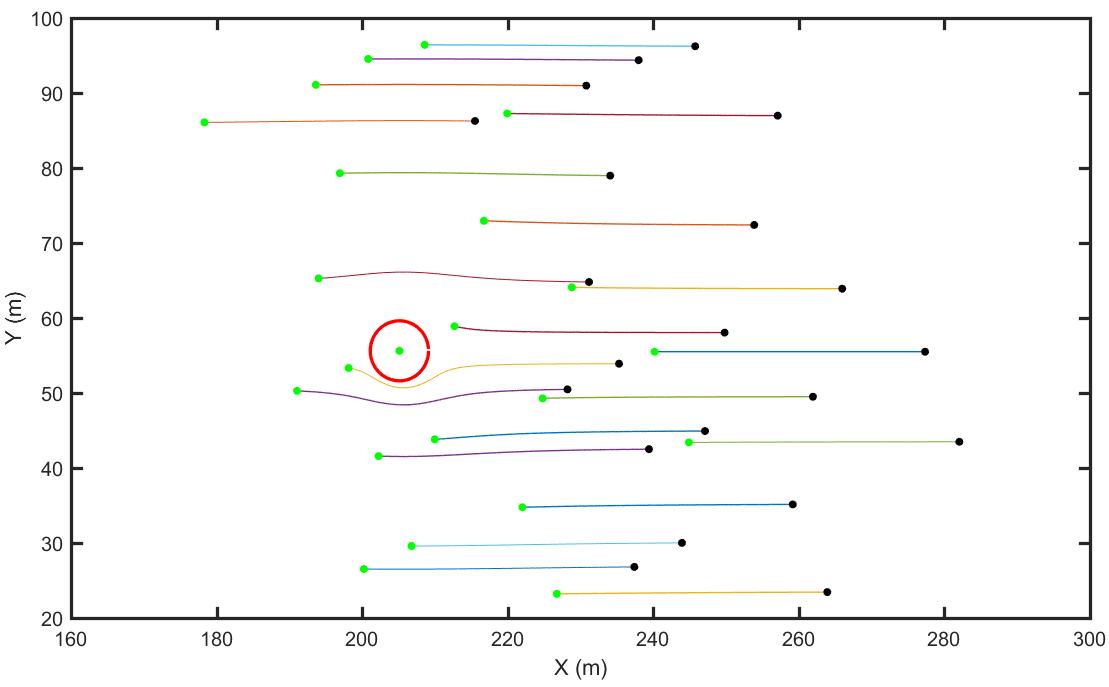}
\caption{Paths of healthy agents over $t\in [100.35,118.92]$ when CEM is active. The green and block markers show positions of agents at times $100.35$ and $118.92$, respectively. Failed agent $11$ is wrapped by a disk of radius {\color{black}$4m$} centered at $(205.26,55.62)$ when CEM is active.}
\label{CEMEvolution}
\end{figure}

\subsection{Motion Phase 3 (HDM)}
CEM continues until switching time $118.92s$ when failed agent $11$ leaves containment box $\Omega_{\mathrm{con}}$. Fig. \ref{FinalFormation} shows the agents' configuration at time $t=118.92$. Followers use the method from Section \ref{Homogeneous Deformation Mode} to find their in-neighbors as well as communication weights. HDM remains active after $t=118.92$ since no other agents fail in this simulation. $x$ and $y$ components of actual agent positions were plotted versus time for $t\in[118.92,168.92]s$ earlier in Fig.  \ref{XYHDMCPMHDMV}.
\begin{figure}[ht]
\centering
\includegraphics[width=3.3 in]{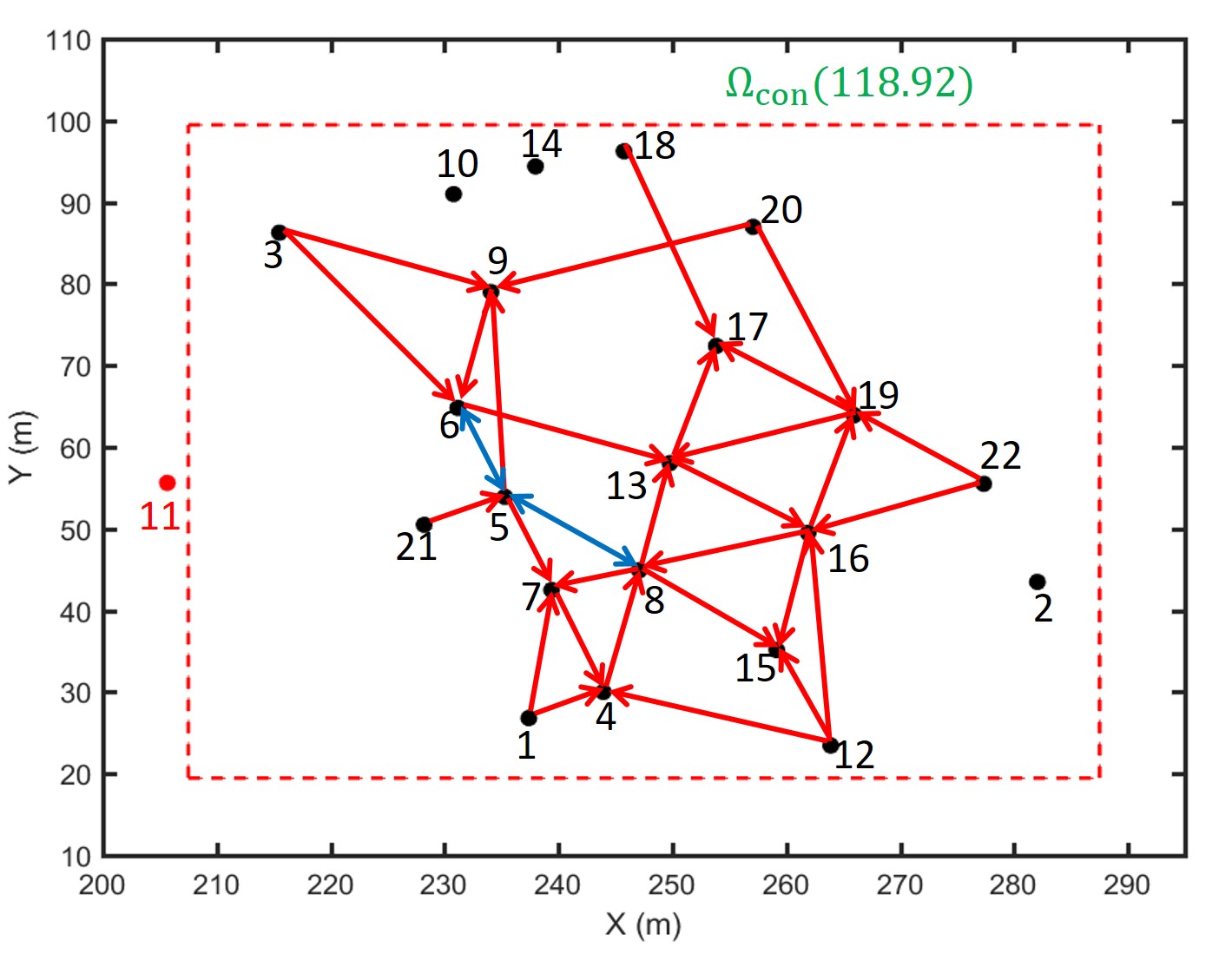}
\caption{Formation of agents at time $t=118.92s$. Failed agent $11$ is outside the containment region $\Omega_{\mathrm{con}}$. HDM is activated and inter-agent communication is established using the procedure developed in Section \ref{Homogeneous Deformation Mode}. }
\label{FinalFormation}
\end{figure}

\section{Conclusion}\label{conclusion}
This paper develops a hybrid cooperative control strategy with two operational modes to manage large-scale coordination of agents in a resilient fashion. The first mode (HDM) treats agents as particles of a deformable body and is active when all agents are healthy. HDM guarantees agents can safely initialize and coordinate their motions using the unique features of homogeneous deformation coordination. A new CEM cooperative paradigm was proposed to handle cases in which one or more vehicles in the shared motion space fail to admit the desired coordination.   In CEM the desired vehicle coordination is treated as an ideal fluid flow and failed vehicles are excluded by closed curves. Therefore, desired trajectories for the remaining healthy vehicles can be planned and collective motion safety for healthy vehicles can still be guaranteed with low computation overhead. To automatically initiate transition to CEM, this paper contributes a strategy for quickly detecting agent failure using the unique properties of the homogeneous deformation coordination.  Future work is needed to relax motion constraints on failed vehicles and present simulation results with realistic vehicle dynamics and more complex environments.

\begin{ack}                               
This work has been supported by the National Science Foundation under Award Nos. 1739525 and 1914581.
\end{ack}

\bibliographystyle{plain}        
\bibliography{myrefs}           



\end{document}